\definecolor{Gred}{RGB}{219, 50, 54}
\definecolor{ToCgreen}{RGB}{0, 128, 0}
\DeclareMathAlphabet{\pazocal}{OMS}{zplm}{m}{n} 
\renewcommand{\mathcal}[1]{\pazocal{#1}}
\def\@fnsymbol#1{\ensuremath{\ifcase#1\or *\or \dagger\or \ddagger\or
   \mathsection\or \mathparagraph\or \|\or **\or \dagger\dagger
   \or \ddagger\ddagger \else\@ctrerr\fi}}
\DeclareMathOperator{\calD}{\mathcal{D}}
\DeclareMathOperator{\calE}{\mathcal{E}}
\DeclareMathOperator{\Tr}{\mathrm{Tr}}
\newcommand{\norm}[1]{\|#1\|}
\title{Predicting quantum channels over general product distributions}
\author{
Sitan Chen\thanks{Harvard University. Email: \href{mailto:sitan@seas.harvard.edu}{sitan@seas.harvard.edu}.}
\and
Jaume de Dios Pont\thanks{ETH Zurich. Email: \href{mailto:jaume.dediospont@math.ethz.ch}{jaume.dediospont@math.ethz.ch}.}
\and
Jun-Ting Hsieh\thanks{Carnegie Mellon University. Email: \href{mailto:juntingh@cs.cmu.edu}{juntingh@cs.cmu.edu}.}
\and
Hsin-Yuan Huang\thanks{Google Quantum AI, Caltech. Email: \href{mailto:hsinyuan@google.com}{hsinyuan@google.com}, \href{mailto:hsinyuan@caltech.edu}{hsinyuan@caltech.edu}.}
\and
Jane Lange\thanks{MIT. Email: \href{mailto:jlange@mit.edu}{jlange@mit.edu}.}
\and
Jerry Li\thanks{Microsoft Research. Email: \href{mailto:jerryzli@u.washington.edu}{jerryzli@u.washington.edu}.}
}
\date{}
\begin{document}
\maketitle

\begin{abstract}
We investigate the problem of predicting the output behavior of unknown quantum channels. Given query access to an $n$-qubit channel $\mathcal{E}$ and an observable $\mathcal{O}$, we aim to learn the mapping
\begin{equation*}
    \rho \mapsto \Tr(\mathcal{O} \mathcal{E}[\rho])
\end{equation*}
to within a small error for most $\rho$ sampled from a distribution $\mathcal{D}$. Previously, Huang, Chen, and Preskill~\cite{huang2023learning} proved a surprising result that even if $\mathcal{E}$ is arbitrary, this task can be solved in time roughly $n^{O(\log(1/\epsilon))}$, where $\epsilon$ is the target prediction error. 
However, their guarantee applied only to input distributions $\mathcal{D}$ invariant under all single-qubit Clifford gates, and their algorithm fails for important cases such as general product distributions over product states $\rho$. 

In this work, we propose a new approach that achieves accurate prediction over essentially any product distribution $\mathcal{D}$, provided it is not ``classical'' in which case there is a trivial exponential lower bound. Our method employs a ``biased Pauli analysis,'' analogous to classical biased Fourier analysis. Implementing this approach requires overcoming several challenges unique to the quantum setting, including the lack of a basis with appropriate orthogonality properties. The techniques we develop to address these issues may have broader applications in quantum information.

\end{abstract}

\section{Introduction}

When is it possible to learn to predict the outputs of a quantum channel $\calE$?
Such questions arise naturally in a variety of settings, such as the experimental study of complex quantum dynamics~\cite{huang2022quantum,huang2021information}, and in fast-forwarding simulations of Hamiltonian evolutions~\cite{cirstoiu2020variational,gibbs2024dynamical}.
However, in the worst case this problem is intractable, as it generalizes the classical problem of learning an arbitrary Boolean function over the uniform distribution from black-box access.
To circumvent this, our goal is to understand families of natural restrictions on the problem under which efficient estimation is possible.

One way to avoid this exponential scaling would be to posit further structure on the channel, e.g. by assuming it is given by a shallow quantum circuit~\cite{nadimpalli2023pauli,huang2024learning} or a structured Pauli channel \cite{harper2020efficient, flammia2021pauli, harper2021fast, van2023probabilistic, arunachalam2024learning}.
However, there are settings where the evolutions may be quite complicated --- e.g. the channel might correspond to the time evolution of an evaporating black hole~\cite{hayden2007black, hayden2019learning, penington2022replica, yang2023complexity} --- and where it is advantageous to avoid such strong structural assumptions on the underlying channel.



Recently,~\cite{huang2023learning} considered an alternative workaround in which one only attempts to learn a complicated $n$-qubit channel in an \emph{average-case} sense. Given query access to $\calE$, and given an observable $\mcO$, the goal is to learn the mapping
\begin{equation*}
    \rho \mapsto \Tr(\mcO\calE[\rho]) 
\end{equation*}
accurately on average over input states $\rho$ drawn from some $n$-qubit distribution $\calD$, rather than over worst-case input states. The authors of~\cite{huang2023learning} came to the surprising conclusion that this average-case task is tractable even for \emph{arbitrary} channels $\calE$, provided $\calD$ comes from a certain class of ``locally flat'' distributions. Their key observation was that the Heisenberg-evolved observable $\calE^{\dagger}[\mcO]$ admits a \emph{low-degree approximation} in the Pauli basis, where the quality of approximation is defined in an average-case sense over input state $\rho$.

Another interesting feature of this result is that their learning algorithm only needs to query $\calE$ on random product states, regardless of the choice of locally flat distribution $\calD$. This is both an advantage and a shortcoming. On one hand, if one is certain that the states $\rho$ one wants to predict on are samples from a locally flat distribution, no further information about $\calD$ is needed to implement the learning protocol in~\cite{huang2023learning}. On the other hand, locally flat distributions are quite specialized: they are constrained to be invariant under any single-qubit Clifford gate. In particular, almost all product distributions over product states fall outside this class. Worse yet, the general approach of low-degree approximation in the Pauli basis can be shown to fail when local flatness does not hold (see \Cref{sec:lowerbound_unbiased_degree}). We therefore ask:

\begin{center}
    {\em 
        Are there more general families of distributions $\calD$ under which one can\\
        learn to predict arbitrary quantum dynamics?
    }
\end{center}

Identifying rich settings where it is possible to characterize the average-case behavior of such dynamics, while making minimal assumptions on the dynamics, is of intense practical interest. Unfortunately, our understanding of this remains limited: even for general product distributions, known techniques break down. In this work we take an important first step towards this goal by completely characterizing the complexity of learning to predict arbitrary quantum dynamics in the product setting.
Informally stated, our main result is that learning is possible \emph{so long as the distribution is not classical}.
That is, for this problem there is a ``blessing of quantum-ness'': as long as the distribution displays any quantitative level of quantum behavior, there is an efficient algorithm for predicting arbitrary quantum dynamics under this distribution.

More formally, note that if $D$ is the uniform distribution over the computational basis states $\ket{0}$ and $\ket{1}$, then the task of predicting $\Tr(\mcO\calE[\rho])$ on average over $\rho\sim \calD \triangleq D^{\otimes n}$ for an arbitrary channel $\calE$ is equivalent to the task of learning an arbitrary Boolean function from random labeled examples, which trivially requires exponentially many samples. This logic naturally extends to any ``two-point'' distribution in which $D$ is supported on two diametrically opposite points on the Bloch sphere. 
Note that any such distribution, up to a rotation, is an embedding of a classical distribution onto the Bloch sphere.

A natural way of quantifying closeness to such distributions is in terms of the second moment matrix $\mcS \in \R^{3\times 3}$ of the distribution $D$, when $D$ is viewed as a distribution over the Bloch sphere (see \Cref{sec:bloch} for formal definitions).
We refer to this matrix as the \emph{Pauli second moment matrix} of $D$.
For the purposes of this discussion, the key property of this matrix is that $\norm{\mcS}_{\mathsf{op}} \leq 1$ for all $D$, and moreover, $\norm{\mcS}_{\mathsf{op}} = 1$ if and only if $D$ is one of the aforementioned two-point distributions.
With this, we can now state our main result:

\begin{theorem}[Learning an unknown quantum channel]
\label{thm:quantum}
    Let $\eps, \delta, \eta \in (0,1)$.
   Let $D$ be an unknown distribution over the Bloch sphere with Pauli second moment matrix $\mcS$ such that $\norm{\mcS}_{\op} \le 1-\eta$.
   Let $\mcE$ be an unknown $n$-qubit quantum channel, and let $\mcO$ be a known $n$-qubit observable.
   There exists an algorithm with time and sample complexity
   $\min(2^{O(n)}/\epsilon^2, n^{O(\log(1/\eps)/\log(1/(1-\eta)))})\cdot \log(1/\delta)$ that outputs an efficiently computable map $f'$ such that 
   \[\E_{\rho \sim D^{\otimes n}} [(\tr(\mcO\mcE[\rho]) - \tr(f'(\rho)))^2] \le \eps\]
   with probability at least $1-\delta$. 
\end{theorem}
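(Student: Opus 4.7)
The plan is to reduce predicting $\Tr(\mcO\mcE[\rho])$ to computing a low-degree approximation of $f(\rho):=\Tr(H\rho)$ in a biased orthonormal basis adapted to $D$, where $H:=\mcE^\dagger[\mcO]$ is the Heisenberg-evolved observable; after normalizing $\|\mcO\|_{\op}\le 1$, we have $\|H\|_{\op}\le 1$. The trivial $2^{O(n)}/\eps^2$ branch of the $\min$ is handled by a classical-shadow direct estimator, so I focus on the non-trivial degree-$k$ branch.

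First, build an orthonormal single-qubit basis adapted to $D$. Apply a rotation diagonalizing the covariance $\Sigma:=\mcS-\mu\mu^T$, yielding eigenvalues $\lambda_1,\lambda_2,\lambda_3\le \|\Sigma\|_{\op}\le\|\mcS\|_{\op}\le 1-\eta$ (using $\Sigma\preceq\mcS$). In the rotated Pauli basis $\{\sigma^a\}$, set $G_0:=I$ and $G_a:=(\sigma^a-\mu_a I)/\sqrt{\lambda_a}$ for $a\ge 1$; a direct computation verifies $\E_{\rho\sim D}[\Tr(G_a\rho)\Tr(G_b\rho)]=\delta_{ab}$, so the tensor products $G^t:=\bigotimes_i G_{t_i}$ form an orthonormal basis of $L^2(D^{\otimes n})$. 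Expanding $f=\sum_t c_t\,\Tr(G^t\rho)$, Parseval gives $\sum_t c_t^2 = \|f\|_{L^2(D^{\otimes n})}^2 \le \|H\|_{\op}^2 \le 1$.

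The core of the argument is a low-degree approximation lemma: $\sum_{|t|>k}c_t^2 \le \eps$ for $k=O(\log(1/\eps)/\log(1/(1-\eta)))$, where $|t|$ counts non-identity entries of $t$. Explicitly tracking the Pauli-to-$G$-basis change shows $c_t = B_t \prod_{i:\,t_i\neq 0}\sqrt{\lambda_{t_i}}$, where $B_t$ is a Pauli coefficient of the marginalized operator $\Tr_{U^c}(H\,\rho_{\mathrm{avg}}^{\otimes U^c})$ and $U$ is the support of $t$. Each factor $\sqrt{\lambda_{t_i}}\le\sqrt{1-\eta}$ produces the pointwise decay $c_t^2\le(1-\eta)^{|t|}B_t^2$. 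I expect the main obstacle to be upgrading this pointwise decay to the global tail bound: summing $B_t^2$ over all supports of size $>k$ naively costs a $\binom{n}{|t|}$ factor, so one needs a genuinely global argument that uses $\|H\|_{\op}\le 1$ non-trivially. This is where the ``lack of orthogonality'' tension flagged in the abstract materializes --- $\{G_a\}$ is orthonormal in $L^2(D)$ but \emph{not} in Frobenius, so the usual route from operator norm to coefficient sums does not transfer automatically. A promising handle is the noise channel $T_\gamma[X]:=\gamma X+(1-\gamma)\Tr(X\rho_{\mathrm{avg}})I$, which acts diagonally as $T_\gamma[G^t]=\gamma^{|t|}G^t$ and is operator-norm contractive for $\gamma\in[0,1]$; carefully leveraging the family $\{T_\gamma\}$ and the constraint $\|T_\gamma^{\otimes n}[H]\|_{\op}\le 1$ should yield the desired biased quantum hypercontractive tail bound.

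Given the low-degree bound, the algorithm queries $\mcE$ on random product states, uses classical shadow tomography to produce unbiased estimates $\hat y_i\approx\Tr(\mcO\mcE[\rho^{(i)}])$, and performs $\ell_2$ regression over the $O((3n)^k)=n^{O(k)}$ biased-basis functions of degree at most $k$. Standard uniform convergence bounds for linear regression with bounded features then show the empirical minimizer attains expected $L^2(D^{\otimes n})$ loss within $\eps$ of the best degree-$\le k$ approximation, with $\mathrm{poly}(n^k,1/\eps,\log(1/\delta))$ total samples, matching the stated $n^{O(\log(1/\eps)/\log(1/(1-\eta)))}\log(1/\delta)$ bound.
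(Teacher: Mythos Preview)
Your overall architecture (Heisenberg-evolve, find a low-degree approximation in a biased basis, then regress) matches the paper, and your identification of the obstacle is exactly right. But the step where you actually close the gap is missing, and the fix you sketch does not work.

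Concretely: your Parseval argument already gives $\sum_t c_t^2=\|f\|_{L^2(D^{\otimes n})}^2\le 1$, so the noise operator buys you nothing in the direction you need. For $\gamma\in[0,1]$ you correctly have $\|T_\gamma^{\otimes n}[H]\|_{\op}\le 1$, and applying Parseval to $T_\gamma^{\otimes n}[H]$ yields $\sum_t\gamma^{2|t|}c_t^2\le 1$, which is \emph{weaker} than what you already had. A tail bound would require control for $\gamma>1$ (equivalently, a biased quantum hypercontractive inequality), and $T_\gamma$ is not operator-norm contractive there; no such inequality is stated or proved. Your alternative route via the marginals $B_t$ is correct pointwise, and indeed one can show $\sum_{t:\,\mathrm{supp}(t)=U}B_t^2\le 1$ for each fixed $U$ using $\|\Tr_{U^c}(H\,\rho_{\mathrm{avg}}^{\otimes U^c})\|_{\op}\le 1$; but summing over $U$ reinstates the $\binom{n}{|U|}$ factor you were trying to avoid. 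So as written, the proposal does not establish the low-degree lemma.

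The paper's resolution is genuinely different from what you outline. Rather than diagonalizing the covariance and working in an $L^2(D)$-orthonormal basis, it rotates so that the \emph{mean state} is diagonal, $\bar\rho=\tfrac12(I+\mu Z)$, and uses the basis $\{I,X,Y,(Z-\mu I)/\sqrt{1-\mu^2}\}^{\otimes n}$, which is only mean-zero (not $L^2$-orthonormal). The crucial move is to compare $\E_\rho[\Tr(\mcO^{>d}\rho)^2]$ not to $\E_\rho[\Tr(\mcO\rho)^2]$ (your Parseval quantity) but to $\Tr(\mcO^2\,\bar\rho^{\otimes n})\le\|\mcO\|_{\op}^2\le 1$. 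In this basis both quantities decompose as $\sum_S\hat{\mcO}_S^\dagger N^{\otimes|S|}\hat{\mcO}_S$ for $3\times 3$ single-site matrices $M'$ and $M$ respectively, and the heart of the proof is the matrix inequality $M'^{\otimes k}\preceq(1-\eta')^{k}\,\mathrm{Re}(M^{\otimes k})$, established via an eigenvalue lower bound on $\mathrm{Re}((I+\mu Y)^{\otimes k})$ together with an operator-norm bound on a rescaled covariance. This sidesteps the $\binom{n}{k}$ blowup entirely, because the bound on $\Tr(\mcO^2\bar\rho^{\otimes n})$ is a single global quantity rather than a per-support one.
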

\noindent Note that the only condition on $D$ we require is a quantitative bound on the spectral norm of its Pauli second moment matrix.
In other words, so long as the distribution $D$ is far from any two-point distribution, i.e., it is far from any classical distribution, we demonstrate that there is an efficient algorithm for learning to predict general quantum dynamics under this distribution.
Previously it was only known how to achieve the above guarantee in the special case where $D$ has mean zero. Indeed, as soon as one deviates from the mean zero case, the Pauli decomposition approach of~\cite{huang2023learning} breaks down. In contrast, our guarantee works for any product distribution whose marginal second moment matrices have operator norm bounded away from $1$.

\begin{remark}
We note that our techniques generalize to the case where the distribution is the product of different distributions over qubits, 
so long as each distribution has second moment with operator norm bounded by $1-\eta$.
However, for readability we will primarily focus on the case where all of the distributions are the same.
See \Cref{rem:different-distributions} for a discussion of how to easily generalize our techniques to this setting.
\end{remark}

\paragraph{Beyond low-degree concentration in an orthonormal basis.} Here we briefly highlight the key conceptual novelties of our analysis, which may be of independent interest. We begin by recalling the analysis in~\cite{huang2023learning} in greater detail. They considered the decomposition of $\mcO'\triangleq \calE^{\dagger}[\mcO]$ into the basis of $n$-qubit Pauli operators, i.e. $\mcO' = \sum_{P \in \{I,X,Y,Z\}^n} \alpha_P\cdot P$ and argued that this is well-approximated by the \emph{low-degree truncation} $\mcO'_{\rm low} = \sum_{|P|<t} \alpha_P \cdot P$. This can be readily seen from the following calculation. By rotating the distribution $D$, we may assume the covariance is diagonal, with entries bounded by $1 - \eta$. Then the error achieved by the low-degree truncation is given by
\begin{equation*}
    \E[\Tr((\mcO' - \mcO'_{\rm low})\rho)^2] = \E\Bigl[\Bigl(\sum_{|P|\ge t} \alpha_P\cdot \Tr(P\rho)\Bigr)^2\Bigr] \le \sum_{|P|\ge t} (1 - \eta)^{|P|} \cdot \alpha_P^2 \le (1 - \eta)^t \cdot \frac{1}{2^n}\|\mcO'\|^2_F \le (1 - \eta)^t \,,
\end{equation*}
where the second step follows from the fact that
$\E[\Tr(P\rho) \Tr(Q\rho)] = 0$ if $P \neq Q$ and is at most $(1-\eta)^{|P|}$ if $P = Q$ (since $D$ is mean zero and its covariance is diagonal with entries bounded by $1 - \eta$), and the last step follows by the assumption that $\|\mcO'\|_{\sf op} \le 1$.

Note that when $D$ is not mean zero, this step breaks, and we do not have this nice exponential decay in $t$. In fact, in \Cref{sec:lowerbound_unbiased_degree} we construct examples of operators which are not well-approximated by their low-degree truncations in the Pauli basis when $D$ has mean bounded away from zero.

A natural attempt at a workaround would be to change the basis under which we truncate. At least classically, biased product distributions over the Boolean hypercube still admit suitable orthonormal bases of functions, namely the \emph{biased Fourier characters}. As we show in \Cref{sec:classical}, this idea can be used to give the following learning guarantee in the classical case where $D$ is only supported along the $Z$ direction in the Bloch sphere:

\begin{theorem}[PAC learning over a concentrated product distribution]
\label{thm:classical}
    Let $\eps, \delta, \eta \in (0,1)$.
    Let $D$ be an unknown distribution over the interval $[-(1-\eta),1-\eta]$.
    Let $f:[-1,1]^n \to [-1,1]$ be an unknown bounded, multilinear function.
    There exists an algorithm with time and sample complexity 
   $n^{O(\log(1/\eps)/\log(1/(1-\eta)))}\cdot \log(1/\delta)$ that outputs a hypothesis $f'$ such that 
   \[\E_{x \sim D^{\otimes n}}[(f(x) - f'(x))^2] \le \eps\]
   with probability at least $1-\delta$. 
\end{theorem}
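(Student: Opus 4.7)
The plan is to perform a biased Fourier analysis of $f$ in $L^2(D^{\otimes n})$, establish low-degree concentration of the biased Fourier coefficients, and then estimate those coefficients empirically. Let $\mu := \E_{x \sim D}[x]$ and $\sigma^2 := \E_{x \sim D}[(x-\mu)^2]$; since $\mathrm{supp}(D) \subseteq [-(1-\eta), 1-\eta]$ we have $\mu^2 + \sigma^2 \le (1-\eta)^2$ and in particular $\sigma \le 1-\eta$. Define $\phi(x) := (x-\mu)/\sigma$ and $\chi_S(x) := \prod_{i \in S} \phi(x_i)$; then $\{\chi_S\}_{S \subseteq [n]}$ is an orthonormal basis of the multilinear subspace of $L^2(D^{\otimes n})$. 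Expanding $f = \sum_S \hat f(S)\,\chi_S$, Parseval together with $|f|\le 1$ gives $\sum_S \hat f(S)^2 \le 1$.

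The heart of the argument is a low-degree concentration bound $\sum_{|S|>t} \hat f(S)^2 \le \eps$ for $t = \Theta(\log(1/\eps)/\log(1/(1-\eta)))$. The key identity, obtained from $\E_D[\phi] = 0$ and $\E_D[x\phi(x)] = \sigma$ by integrating out the coordinates in $S$ one at a time, is
\[
\hat f(S) \;=\; \sigma^{|S|}\,\E_{x_{-S} \sim D^{\otimes(n-|S|)}}\bigl[\partial_S f(x_{-S})\bigr],
\]
where $\partial_S f$ is the multilinear partial derivative of $f$ with respect to the variables in $S$. The fact that $|f|\le 1$ on the \emph{entire} cube $[-1,1]^n$---rather than just on $\mathrm{supp}(D)^n$---propagates through the decomposition $f(x) = f_{-i}(x_{-i}) + x_i\partial_i f(x_{-i})$ and $|f_{-i}\pm \partial_i f|\le 1$ at $x_i = \pm 1$ to give $\|\partial_S f\|_\infty \le 1$ on $[-1,1]^{n-|S|}$ by induction. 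Applying Parseval to $\partial_S f$ in the residual biased Fourier basis then yields the subset-sum inequality
\[
\sum_{U \supseteq S} \hat f(U)^2 \;\le\; \sigma^{2|S|} \;\le\; (1-\eta)^{2|S|} \qquad \text{for every } S \subseteq [n],
\]
which is the direct classical counterpart of the Pauli-weight decay used in the mean-zero analysis of~\cite{huang2023learning}. The remaining step is to combine this family of inequalities over all $S$ of size $t+1$, together with global Parseval, to extract a tail bound of order $(1-\eta)^{\Theta(t)}$ without a combinatorial blow-up.

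Given the concentration, the algorithm is the standard empirical biased Fourier estimator. Draw $m = n^{O(t)}\cdot \eps^{-2}\log(1/\delta)$ i.i.d.\ labeled samples $(x^{(k)}, f(x^{(k)}))$, form $\tilde f(S) := \tfrac{1}{m}\sum_{k}f(x^{(k)})\,\chi_S(x^{(k)})$ for every $|S|\le t$, and output $f'(x) := \sum_{|S|\le t}\tilde f(S)\,\chi_S(x)$. Since each variable $f\chi_S$ has second moment at most $\E[\chi_S^2] = 1$, Bernstein/median-of-means combined with a union bound over the $n^{O(t)}$ coefficients gives $\sum_{|S|\le t}(\tilde f(S) - \hat f(S))^2 \le \eps/2$ with probability at least $1-\delta$, which combined with the truncation bound yields $\E_{x \sim D^{\otimes n}}[(f-f')^2] \le \eps$. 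The main obstacle throughout is the concentration lemma in the second paragraph: in the mean-zero case the monomial basis is already orthogonal in $L^2(D^{\otimes n})$ and the tail sum telescopes directly against Parseval on the uniform $\{-1,1\}^n$ measure, but once $\mu \ne 0$ the individual coefficient bound $|\hat f(S)| \le \sigma^{|S|}$ is too weak---naive summation inflates by $\binom{n}{t+1}$---and one must combine the subset-sum inequalities non-trivially to recover an $n$-independent exponent.
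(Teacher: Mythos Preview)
Your subset-sum inequalities $\sum_{U\supseteq S}\hat f(U)^2\le\sigma^{2|S|}$ are correct, but they are genuinely too weak to extract the tail bound, and the ``remaining step'' you flag cannot be completed as stated. To see this, fix $t$, let $n$ be large, and take $m=\lceil\sigma^{-2(t+1)}\rceil$ pairwise disjoint sets $U_1,\dots,U_m$ of size $t+1$; set $a_{U_j}=\sigma^{2(t+1)}$ and $a_U=0$ otherwise. Every constraint $\sum_{U\supseteq S}a_U\le\sigma^{2|S|}$ is satisfied (for $S\subseteq U_j$ it reads $\sigma^{2(t+1)}\le\sigma^{2|S|}$; otherwise the sum is zero), global Parseval is tight, yet $\sum_{|U|>t}a_U=1$. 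So no combination of these inequalities can force exponential decay in $t$ independent of $n$. The information you are discarding is precisely the boundedness of $f$ on $\{\pm 1\}^n$ beyond what $\|\partial_S f\|_{\infty}\le 1$ plus $D$-Parseval encode.

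The paper's fix is to bring in a \emph{second} product measure: the $\{\pm1\}$-valued distribution $\tilde D$ with the same mean $\mu$, whose orthonormal basis is $\psi_S(x)=\prod_{i\in S}\frac{x_i-\mu}{\sqrt{1-\mu^2}}$. Since $|f|\le 1$ on $\{\pm1\}^n$, Parseval for $\tilde D^{\otimes n}$ gives $\sum_S\hat f_\psi(S)^2\le 1$, and because $\chi_S=(\sqrt{1-\mu^2}/\sigma)^{|S|}\psi_S$ one has $\hat f(S)=(\sigma/\sqrt{1-\mu^2})^{|S|}\hat f_\psi(S)$, whence
\[
\sum_{|S|>t}\hat f(S)^2=\sum_{|S|>t}\Bigl(\tfrac{\sigma^2}{1-\mu^2}\Bigr)^{|S|}\hat f_\psi(S)^2\le\Bigl(\tfrac{\sigma^2}{1-\mu^2}\Bigr)^{t}\le(1-\eta)^{2t},
\]
using $\sigma^2\le(1-\eta)^2(1-\mu^2)$ (the Bernoulli on $\{\pm(1-\eta)\}$ maximizes variance at fixed mean). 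Your own identity actually makes this transparent: since $\partial_U f$ is multilinear and $D,\tilde D$ share mean $\mu$, $\E_D[\partial_U f]=\E_{\tilde D}[\partial_U f]$, so $\hat f_\psi(U)=(1-\mu^2)^{|U|/2}\E[\partial_U f]$ and $\hat f(U)=(\sigma^2/(1-\mu^2))^{|U|/2}\hat f_\psi(U)$ directly. The missing idea, then, is not a clever combination of the subset-sum bounds but rather applying Parseval with respect to $\tilde D$ instead of $D$.
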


\noindent Unfortunately, when we move beyond the classical setting, the picture becomes trickier. In particular, it is not immediately clear what the suitable analogue of the biased Fourier basis should be in the quantum setting. We could certainly try to consider single-qubit operators of the form $\wt{P} = P - \mu_P\cdot I$ for $P\in\{X,Y,Z\}$, where $\mu_P$ denotes the $P$-th coordinate of the mean of $D$ regarded as a distribution over the Bloch sphere. We could then extend naturally to give a basis over $n$ qubits, and the functions $\rho\mapsto \Tr(\wt{P}\rho)$ would by design be orthogonal to each other with respect to the distribution $D^{\otimes n}$. Writing $\mcO' = \sum_P \wt{\alpha}_P \cdot \wt{P}$ and defining $\wt{\mcO}_{\rm low} \triangleq \sum_{|P|<t}\wt{\alpha}_P \cdot \wt{P}$, we can mimic the calculation above and obtain
\begin{equation*}
    \E[\Tr((\mcO' - \wt{\mcO}_{\rm low})\rho)^2] = \E\Bigl[\Bigl(\sum_{|P|\ge t} \wt{\alpha}_P\cdot \Tr(\wt{P}\rho)\Bigr)^2\Bigr] \le (1 - \eta)^{|P|} \sum_{|P|\ge t} \wt{\alpha}^2_P\,.
\end{equation*}
Unfortunately, at this juncture the above naive approach hits a snag. In the mean zero case, we could easily relate $\sum_P \alpha_P^2$ to $\frac{1}{2^n}\norm{\mcO'}^2_F$ because of a fortuitous peculiarity of the mean-zero setting. In that setting, we implicitly exploited both that the Pauli operators $P$ are orthogonal to each other with respect to the trace inner product, and also that that the functions $\rho\mapsto \Tr(P\rho)$ are orthogonal to each other with respect to $D^{\otimes n}$. In the above approach for nonzero mean, we achieved the latter condition by shifting the Pauli operators to define $\wt{P}$, but these shifted operators are no longer orthogonal to each other with respect to the trace inner product.

Circumventing this issue is the technical heart of our proof. As we will see in \Cref{sec:quantum}, several key technical moves are needed. First, instead of assuming that the covariance of $D$ is diagonalized, we will fix a rotation that simplifies the \emph{mean}, $\E[\rho]$. Second, instead of shifting the basis operators $\{X,Y,Z\}$ so that the resulting functions $\rho\mapsto \Tr(\wt{P}\rho)$ are orthogonal with respect to $D^{\otimes n}$, we shift them so that $\E[\rho]$ is \emph{orthogonal} to them, and define $\mcO'_{\rm low}$ by truncating in this new basis instead. Finally, instead of directly bounding the truncation error $\E[\Tr((\mcO' - \mcO'_{\rm low})\rho)^2]$ using the above sequence of steps, we crucially relate it to the quantity
\begin{equation*}
    \Tr((\mcO')^2 \E[\rho])
\end{equation*}
in order to establish exponential decay. Note that $\Tr((\mcO')^2 \E[\rho]) \leq \norm{\mcO'}^2_{\sf op} \cdot \Tr(\E[\rho]) \leq 1$. To our knowledge, all three of these components are new to our analysis. We leave it as an intriguing open question to find other applications of these ingredients to domains where ``biased Pauli analysis'' arises.

\begin{remark}[Predicting multiple observables]\label{remark:pre_shadows}
    {\em Just as in~\cite{huang2023learning}, we can also easily extend our guarantee to the setting where we wish to learn the joint mapping
    \begin{equation}
        (\rho, \mcO) \mapsto \Tr(\mcO\calE[\rho])\,. \label{eq:joint}
    \end{equation}
    This is the natural channel learning analogue of the question of classical shadows for state learning~\cite{huang2020predicting} \--- recall that in the latter setting, one would like to perform measurements on copies of $\rho$ and obliviously produce a classical description of the state that can then be used to compute some collection of observable values. We sketch the argument for extending to learning the joint mapping in Eq.~\eqref{eq:joint} in \Cref{remark:shadows}.}
\end{remark}

\paragraph{Impossibility for general concentrated distributions.} 

It is natural to wonder to what extent our results can be generalized, especially to states that are entangled. Could it be that all one needs is some kind of global covariance bound? Unfortunately, we show in \Cref{sec:lower-bounds} that even in the classical setting, this is not the case. Since classical distributions can be encoded by distributions over qubits,
this implies hardness for learning in the quantum setting as well.

\begin{theorem}[Hardness of learning over general concentrated distributions]
\label{thm:code-lb}
There exists a distribution $\mcD$ over $[-(1-\eta), 1-\eta]^n$ and a concept class $\mcC$ such that no algorithm PAC-learns the class $\mcC$ over $D$ in subexponential time.
\end{theorem}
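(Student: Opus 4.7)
The plan is to exploit the freedom to choose a non-product $\mcD$ by using a binary error-correcting code to carve out an exponentially large, well-separated packing inside $[-(1-\eta),1-\eta]^n$, and to make $\mcC$ as hard as arbitrary Boolean function learning on this packing. Since Theorem~\ref{thm:classical} critically uses the product structure of $D$ to set up its biased Fourier basis, replacing that assumption with code-induced correlations blocks the positive result and opens the door to an essentially cardinality-based lower bound.

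Concretely, I would fix a binary code $C \subseteq \{-1,+1\}^n$ of constant rate, so that $|C| = 2^{\Omega(n)}$ (Gilbert--Varshamov suffices), and let $\mcD$ be the uniform distribution over the scaled codebook $(1-\eta)\cdot C$. Each coordinate marginal is then supported on $\{-(1-\eta), 1-\eta\}$, meeting the concentration requirement, while the joint distribution is highly correlated. I would then define the concept class $\mcC = \{f_g : g \in \{-1,+1\}^C\}$, where $f_g:[-1,1]^n \to \{-1,+1\}$ agrees with $g$ on $(1-\eta)\cdot C$ (under the natural identification with $C$) and is set arbitrarily off the support. Thus $|\mcC| = 2^{|C|}$, an enormous concept class living inside a simple domain.

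The lower bound then follows from a textbook Yao-style argument. Drawing $g$ uniformly from $\{-1,+1\}^C$, the $m$ labeled samples reveal $g$ on at most $m$ distinct codewords, so for each unqueried codeword $c$ the learner's prediction $y := f'((1-\eta)c)$ is independent of $g(c)$ and hence $\Exp_{g(c)}[(g(c) - y)^2] = 1 + y^2 \ge 1$. Integrating over $x \sim \mcD$, the expected squared error is at least $1 - m/|C|$; for any constant $\eps < 1$ this forces $m = 2^{\Omega(n)}$. By Yao's minimax some specific $g^\star \in \mcC$ witnesses the hardness, and since runtime lower bounds sample complexity, no algorithm PAC-learns $\mcC$ over $\mcD$ in subexponential time. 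The only real ``obstacle'' here is conceptual rather than technical: one must be comfortable defining $\mcC$ only via its restriction to $\mathrm{supp}(\mcD)$ and extending arbitrarily off-support, which is legitimate under the standard distribution-restricted PAC definition and is precisely the mechanism by which code-induced correlations translate into concept-class richness. This is fully consistent with Theorem~\ref{thm:classical}: the product assumption on $D$ there is exactly what our code-based $\mcD$ violates.
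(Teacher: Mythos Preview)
Your argument is correct for the theorem as literally stated, and your sample-complexity calculation via Yao is clean. The construction of $\mcD$ as the uniform distribution over $(1-\eta)\cdot C$ is the same as the paper's.

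Where you and the paper diverge is in the choice of concept class, and this is worth noting because it affects how sharply the lower bound complements Theorem~\ref{thm:classical}. You take $\mcC$ to be all $\{-1,+1\}$-valued functions on $\mathrm{supp}(\mcD)$, extended arbitrarily off support; consequently you only need $|C| = 2^{\Omega(n)}$ and never use any minimum-distance property. The paper instead takes $\mcC$ to be the bounded multilinear functions $f:[-1,1]^n\to[-1,1]$, i.e.\ exactly the concept class from Theorem~\ref{thm:classical}. This forces an extra step: for a multilinear $f$, the value $f((1-\eta)x)$ is \emph{not} freely assignable but equals $\E_{z\sim\mathrm{Ber}(1-\eta/2)^{\otimes n}}[f(x\circ z)]$. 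The paper therefore requires the code to have distance $n/4$, so that with high probability the Bernoulli noise $z$ flips fewer than $n/8$ coordinates and $f(x\circ z)=f(x)$; this yields $f((1-\eta)x)\approx f(x)$ on codewords and reduces to the Boolean hardness of Fact~\ref{fact:hardness-learning-code}.

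So your route is more elementary and needs only the rate of the code, while the paper's route needs the distance but delivers a stronger statement: the product assumption in Theorem~\ref{thm:classical} is necessary even when the concept class is held fixed to bounded multilinear functions. Your remark that ``one must be comfortable defining $\mcC$ only via its restriction to $\mathrm{supp}(\mcD)$'' is precisely the concession the paper avoids.
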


\noindent There is a wide spectrum of distributional assumptions
that interpolates between fully product distributions and general concentrated distributions --- for instance, products of $k$-dimensional qudit distributions, output states of small quantum circuits, or distributions over negatively associated variables.
As discussed earlier, our understanding of when it is possible to predict the average-case behavior of arbitrary quantum dynamics is still nascent, and understanding learnability with respect to these more expressive distributional assumptions remains an important open question.

\paragraph{Organization.}
In \Cref{sec:prelims}, we state some preliminaries.
In \Cref{sec:classical}, we prove \Cref{thm:classical}, which is the classical setting and can be viewed as a warm-up to the quantum setting.
In \Cref{sec:quantum}, we prove \Cref{thm:quantum}, our main result.
Finally, in \Cref{sec:lower-bounds}, we show impossibility results including \Cref{thm:code-lb} and the failure of low-degree truncation in the standard Pauli basis.

\subsection{Related work}

Our work is part of a growing literature bridging classical computational learning theory and its quantum counterpart. Its motivation can be thought of as coming from the general area of quantum process tomography~\cite{mohseni2008quantum}, but as this is an incredibly extensive research direction, here we only focus our attention on surveying directly relevant works. 

\paragraph{Quantum analysis of Boolean functions.} 

In~\cite{montanaro2008quantum}, it was proposed to study Pauli decompositions of \emph{Hermitian} unitaries as the natural quantum analogue of Boolean functions. One notable follow-up work~\cite{rouze2022quantum} proved various quantum versions of classical Fourier analytic results like Talagrand's variance inequality and the KKL theorem in this setting, and also obtained corollaries about learning Hermitian unitaries in Frobenius norm given oracle access (see also~\cite{chen2023testing,bao2023nearly} for the non-Hermitian case). Recently, \cite{nadimpalli2023pauli} considered the Pauli spectrum of the \emph{Choi representation} of quantum channels and proved low-degree concentration for channels implemented by $\mathsf{QAC}^0$. One technical difference with our work is that these notions of Pauli decomposition are specific to the channel, whereas the object whose Pauli decomposition we consider is specific to the Heisenberg-evolved operator $\calE^\dagger[\mcO]$.
Additionally, we note that all of the above mentioned works focus on questions more akin to learning a full description of the channel and thus are inherently tied to channels with specific structure.

In contrast, our focus is on learning certain \emph{properties} of the channel, and only in an average-case sense over input states. As mentioned previously, this specific question was first studied in~\cite{huang2023learning}. There have been two direct follow-up works to this paper which are somewhat orthogonal to the thrust of our contributions. The first~\cite{volberg2023noncommutative} establishes refined versions of the so-called \emph{non-commutative Bohnenblust-Hille inequality} which was developed and leveraged by~\cite{huang2023learning} to obtain \emph{logarithmic} sample complexity bounds. In this work, we did not pursue this avenue of improvement but leave it as an interesting open question to improve our sample complexity guarantees accordingly. The second follow-up~\cite{klein2023quantum} to~\cite{huang2023learning} studies the natural qudit generalization of the original question where the distribution over qudits is similarly closed under a certain family of single-site transformations.


Finally, we note the recent work of~\cite{arunachalam2024learning} which studied the learnability of quantum channels with only low-degree Pauli coefficients. Their focus is incomparable to ours as they target a stronger metric for learning, namely $\ell_2$-distance for channels, but need to make a strong assumption on the complexity of the channel being learned. In contrast, we target a weaker metric, namely average-case error for predicting observables, but our guarantee applies for arbitrary channels.

\paragraph{Classical low-degree learning.}

The general technique of low-degree approximation in classical learning theory is too prevalent to do full justice to in this section. This idea of learning Boolean functions by approximating their low-degree Fourier truncation was first introduced in the seminal work of~\cite{linial1993constant}. Fourier-analytic techniques have been used to obtain new classical learning results for various concept classes like decision trees~\cite{kushilevitz1991learning}, linear threshold functions~\cite{kalai2008agnostically}, Boolean formulas~\cite{klivans2001learning}, low-degree polynomials \cite{eskenazis2022learning}, and more.

While Fourier analysis over biased distributions dates back to early work of Margulis and Russo~\cite{margulis1974probabilistic,russo1981critical,russo1982approximate}, it was first applied in a learning-theoretic context in~\cite{furst1991improved}, extending the aforementioned result of~\cite{linial1993constant}.

\section{Preliminaries}
\label{sec:prelims}

\subsection{Bloch sphere and Pauli covariance matrices}
\label{sec:bloch}

The Pauli matrices $I, X, Y, Z$ provide the basis for $2 \times 2$ Hermitian matrices.
This is captured by the following standard fact on expanding a single-qubit state using Pauli matrices.

\begin{fact}[Pauli expansion of states]
\label{fact:pauli expansion}
Any single-qubit mixed state $\rho$ can be written as
\[\rho = \frac{1}{2}(I + \alpha_xX + \alpha_yY + \alpha_zZ) \,,\]
where $\vec{\alpha} \in \R^3$, $\norm{\vec{\alpha}}_2 \le 1$, and $X,Y,Z$ are the standard Pauli matrices:
\[X = \begin{pmatrix}0&1\\
1&0\end{pmatrix}\quad Y = \begin{pmatrix}0&-i\\
i&0\end{pmatrix}\quad Z=\begin{pmatrix}1&0\\
0&-1\end{pmatrix}\,. \]
The set of all such $\vec{\alpha}$ of unit norm is the \emph{Bloch sphere}.

Any single-qubit distribution $D$ can be viewed as a distribution over the Bloch sphere.
We use $\E_D[\rho] \in \C^{2 \times 2}$ and $\vec{\mu} \in \R^3$ to refer to the expected state and the expected Bloch vector respectively.
\end{fact}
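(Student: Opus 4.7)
The plan is to establish the three assertions---existence of the Pauli decomposition, reality of the coefficients, and the norm bound $\|\vec{\alpha}\|_2 \le 1$---using only standard linear algebra together with the positivity of $\rho$.

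First, I would verify that $\{I, X, Y, Z\}$ is an $\R$-linear basis of the four-dimensional real vector space of $2 \times 2$ Hermitian matrices. This follows from the Hilbert--Schmidt orthogonality relations $\Tr(PQ) = 2\,\delta_{PQ}$ for $P, Q \in \{I, X, Y, Z\}$, which in particular give linear independence, combined with a dimension count. Hence every Hermitian matrix $\rho$ admits a unique expansion $\rho = c_I I + c_X X + c_Y Y + c_Z Z$ with $c_P = \tfrac{1}{2}\Tr(P\rho) \in \R$; the reality of each $c_P$ comes from Hermiticity of both $\rho$ and $P$. The normalization $\Tr(\rho) = 1$, together with $\Tr(I) = 2$ and $\Tr(X) = \Tr(Y) = \Tr(Z) = 0$, forces $c_I = 1/2$. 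Setting $\alpha_P = 2 c_P$ for $P \in \{X, Y, Z\}$ yields the claimed form.

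For the bound $\|\vec{\alpha}\|_2 \le 1$, I would compute $\Tr(\rho^2)$ using the Pauli identities $X^2 = Y^2 = Z^2 = I$ together with the fact that distinct non-identity Paulis anticommute, so their symmetrized products are traceless. Expanding the square and taking the trace gives
\[
\Tr(\rho^2) \;=\; \tfrac{1}{4}\,\Tr\bigl((I + \alpha_x X + \alpha_y Y + \alpha_z Z)^2\bigr) \;=\; \tfrac{1}{2}\bigl(1 + \|\vec{\alpha}\|_2^2\bigr).
\]
Since $\rho \succeq 0$ has trace one, its eigenvalues are $\lambda, 1-\lambda \in [0,1]$, so $\Tr(\rho^2) = \lambda^2 + (1-\lambda)^2 \le 1$. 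Combining the two displays yields $\|\vec{\alpha}\|_2^2 \le 1$, as desired.

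The only subtlety---if it can be called an obstacle---is that the norm bound must be extracted from \emph{positivity} rather than mere Hermiticity: a Hermitian trace-one matrix always admits a real Pauli expansion, but only the PSD constraint rules out $\|\vec{\alpha}\|_2 > 1$. Everything else is a routine consequence of the Pauli orthogonality relations and the trace normalization.
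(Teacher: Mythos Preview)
Your argument is correct and is the standard derivation of the Bloch sphere representation. Note, however, that the paper does not supply a proof of this statement at all: it is recorded as a background \emph{Fact} in the preliminaries and left unproved as well-known, so there is no paper-side argument to compare against.
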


\noindent By taking tensor products of Pauli matrices, we obtain the collection of $4^n$ Pauli observables $\{I,X,Y,Z\}^{\otimes n}$, which form a basis for the space of $2^n \times 2^n$ Hermitian matrices:

\begin{fact}[Pauli expansion of observables]
Let $\mcO$ be an $n$-qubit observable. Then $\mcO$ can be written in the following form:
\[\mcO = \sum_{P \in \{I,X,Y,Z\}^{\otimes n}} \hat{\mcO}(P) \cdot P \,,\]
where $\hat{\mcO}(P) \triangleq \tr(\mcO P) / 2^n$. 
\end{fact}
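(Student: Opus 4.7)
The plan is to establish the Pauli expansion by showing that the tensor products $\{I,X,Y,Z\}^{\otimes n}$ form an orthogonal basis for the space of $2^n \times 2^n$ Hermitian matrices with respect to the Hilbert-Schmidt inner product $\langle A, B\rangle = \tr(A^\dagger B)$, and then reading off the unique expansion coefficients by pairing $\mcO$ with each basis element.

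The key computation I would carry out first is the orthogonality of the single-qubit Paulis: a direct calculation gives $\tr(PQ) = 2\delta_{P,Q}$ for $P, Q \in \{I,X,Y,Z\}$, since each Pauli is Hermitian and squares to $I$ (so $\tr(P^2) = 2$), while the six cross traces $\tr(IX), \tr(IY), \tr(IZ), \tr(XY), \tr(XZ), \tr(YZ)$ vanish by inspection. Lifting to $n$ qubits using the multiplicativity of trace across tensor factors,
\[ \tr\!\bigl((P_1 \otimes \cdots \otimes P_n)(Q_1 \otimes \cdots \otimes Q_n)\bigr) = \prod_{i=1}^n \tr(P_i Q_i) = 2^n \prod_{i=1}^n \delta_{P_i, Q_i}. \]
This shows the $4^n$ Pauli strings are pairwise Hilbert-Schmidt-orthogonal and nonzero. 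Since the real vector space of $2^n \times 2^n$ Hermitian matrices has dimension $4^n$ (equivalently, the complex vector space of all $2^n \times 2^n$ matrices has complex dimension $4^n$), a dimension count forces the Pauli strings to form a basis.

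Given the basis property, any observable $\mcO$ admits a unique expansion $\mcO = \sum_P c_P \cdot P$. To extract the coefficients I would pair both sides with a fixed Pauli string $Q$ under the trace inner product, obtaining
\[ \tr(\mcO Q) = \sum_P c_P \cdot \tr(PQ) = 2^n \, c_Q, \]
so $c_Q = \tr(\mcO Q)/2^n = \hat{\mcO}(Q)$, matching the claimed formula. Reality of the coefficients when $\mcO$ is an observable is automatic: both $\mcO$ and $Q$ are Hermitian, so $\tr(\mcO Q) = \overline{\tr((\mcO Q)^\dagger)} = \overline{\tr(Q \mcO)} = \overline{\tr(\mcO Q)}$.

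There is no substantial obstacle here; the entire content is that the Pauli strings implement an orthogonal change of basis on Hermitian matrices. The only point requiring mild care is tracking the $2^n$ normalization coming from $\tr(P^2) = 2$ on each qubit, which is exactly what fixes the $1/2^n$ in the definition of $\hat{\mcO}(P)$.
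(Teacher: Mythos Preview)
Your proof is correct and complete. The paper states this as a standard fact without proof, so there is no approach to compare against; your argument via Hilbert--Schmidt orthogonality and a dimension count is the canonical one.
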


\noindent Next, we define the Pauli covariance and second moment matrices associated to any distribution $D$ over the single-qubit Bloch sphere.

\begin{definition}[Pauli covariance and second moment matrices] \label{def:pauli-second-moment}
Let $D$ be a distribution over the Bloch sphere.
We will associate with $D$ the second moment matrix $\mcS \in \R^{3\times 3}$ and covariance matrix $\Sigma \in \R^{3\times 3}$,
indexed by the non-identity Pauli components $X, Y,$ and $Z$.
We define $\mcS$ such that $\mcS_{P,Q} = \E_{\rho \sim D}[\tr(P\rho)\tr(Q\rho)]$,
and $\Sigma$ such that $\Sigma_{P,Q} = \E_{\rho \sim D}[\tr(P\rho)\tr(Q\rho) - \tr(P\E[\rho])\tr(Q\E[\rho])]$.
\end{definition}

The following is a consequence of \Cref{fact:pauli expansion}:
\begin{fact}
For any distribution over the Bloch sphere, the Pauli second moment matrix $\mcS$ satisfies $\tr(\mcS) = 1$.
\end{fact}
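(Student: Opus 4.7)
The plan is to unpack $\tr(\mcS)$ via the Bloch-vector parameterization and then invoke the Bloch-sphere constraint. First, I would recall from \Cref{fact:pauli expansion} that any pure single-qubit state on the Bloch sphere can be written as $\rho = \tfrac12(I + \alpha_x X + \alpha_y Y + \alpha_z Z)$ with $\vec{\alpha} \in \R^3$ satisfying $\|\vec{\alpha}\|_2 = 1$. Using the standard Pauli orthogonality relations $\tr(P) = 0$ and $\tr(PQ) = 2\,\delta_{P,Q}$ for $P,Q \in \{X,Y,Z\}$, one immediately reads off $\tr(P\rho) = \alpha_P$ for each non-identity Pauli $P$.

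Second, I would specialize the definition of $\mcS$ to its diagonal and sum:
\[
\tr(\mcS) \;=\; \sum_{P \in \{X,Y,Z\}} \mcS_{P,P} \;=\; \sum_{P \in \{X,Y,Z\}} \E_{\rho \sim D}\!\bigl[\tr(P\rho)^2\bigr] \;=\; \E_{\rho \sim D}\bigl[\alpha_x^2 + \alpha_y^2 + \alpha_z^2\bigr] \;=\; \E_{\rho \sim D}\bigl[\|\vec{\alpha}\|_2^2\bigr].
\]
Since $D$ is supported on the Bloch sphere, $\|\vec{\alpha}\|_2 = 1$ almost surely, so the right-hand side equals $1$.

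There is no real technical obstacle here; the statement is essentially a one-line unpacking of the definitions, with Pauli orthogonality as the only ingredient. The only conceptual point worth emphasizing is that the \emph{equality} $\tr(\mcS) = 1$ (as opposed to the weaker inequality $\tr(\mcS) \le 1$) uses the purity of the single-qubit states in the support of $D$: for a distribution over the full Bloch ball one would only obtain $\tr(\mcS) = \E[\|\vec{\alpha}\|_2^2] \le 1$, with equality precisely when $D$ is supported on pure states.
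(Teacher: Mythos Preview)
Your proof is correct and matches the paper's treatment: the paper does not give an explicit proof, merely stating the fact as a consequence of \Cref{fact:pauli expansion}, which is exactly what you unpack. Your remark distinguishing the sphere from the ball is a useful clarification of where the equality comes from.
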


\subsection{Access model}
\label{sec:access model}
In this paper we consider the following, standard access model, see e.g.~\cite{huang2023learning}. We assume that we can interact with the unknown channel $\calE$ by preparing any input state, passing it through $\calE$, and performing a measurement on the output. Additionally, we are given access to training examples from some distribution $\mcD = D^{\otimes n}$ over product states, and their corresponding \emph{classical descriptions}.
Because product states over $n$ qubits can be efficiently represent efficiently using $O(n)$ bits, the training set can be stored efficiently on classical computers.
The standard approach to represent a product state on a classical computer is as follows.
For each state $\rho = \otimes^n_{i=1} \ket{\psi_i}$ sampled from $\calD$, the classical description can be given by their 1-qubit Pauli expectation values: $\Tr(P \lvert \psi_i \rangle\!\langle \psi_i \rvert)$ for all $i\in[n]$ ranging over each qubit and $P\in\{X,Y,Z\}$.

Given these classical samples from $\mcD$ and the ability to query $\calE$, the learning goal is to produce a hypothesis $f'$ which takes as input the classical description of a product state $\rho$ and outputs an estimate for $\Tr(\mcO\calE[\rho])$. Formally, we want this hypothesis to have small test loss in the sense that $\E_{\rho\sim\mcD}[(\Tr(\mcO\calE[\rho]) - \Tr(f'(\rho)))^2] \le \epsilon$ with probability at least $1 - \delta$ over the randomness of the learning algorithm and the training examples from $\mcD$.

\subsection{Generalization bounds for learning}

For our learning protocol, we will use the following elementary results about linear and polynomial regression:

\begin{fact}[Rademacher complexity generalization bound \cite{mohri2018foundations}]
Let $\mcF$ be the class of bounded linear functions $[-1,1]^d\to[-B,B]$, and let $\ell$ be a loss function with Lipschitz constant $L$ and a uniform upper bound of $c$. 
With probability $1-\delta$ over the choice of a training set $S$ of size $m$ drawn i.i.d. from distribution $\mcD$,
\[\E_{(x,y) \sim \mcD}[\ell(f(x),y)] \le \E_{(x,y) \sim S}[\ell(f(x),y)] + 4LB\sqrt{\frac{d}{m}} + 2c \sqrt{\frac{\log 1/\delta}{2m}}\,.\]
\end{fact}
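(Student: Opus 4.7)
The plan is to assemble this bound from three standard ingredients: a symmetrization/McDiarmid argument to reduce the generalization gap to a Rademacher complexity, the Talagrand contraction lemma to strip off the loss $\ell$, and a direct computation of the Rademacher complexity of linear functions over $[-1,1]^d$.

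First, for a training sample $S = \{(x_i,y_i)\}_{i=1}^m$, consider the functional $\Phi(S) \triangleq \sup_{f \in \mcF} \bigl(\E_\mcD[\ell(f(x),y)] - \E_S[\ell(f(x),y)]\bigr)$. Because $\ell$ takes values in $[0,c]$, replacing any single example in $S$ changes $\Phi$ by at most $c/m$, so McDiarmid's bounded differences inequality yields $\Phi(S) \le \E_S[\Phi(S)] + c\sqrt{\log(1/\delta)/(2m)}$ with probability $1-\delta$. A textbook symmetrization step introduces an i.i.d. ghost sample and then Rademacher variables $\sigma_i \in \{\pm 1\}$, giving $\E_S[\Phi(S)] \le 2 R_m(\ell \circ \mcF)$, where $R_m$ denotes the Rademacher complexity.

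Second, I would invoke the Ledoux–Talagrand contraction lemma: since $z \mapsto \ell(z,y)$ is $L$-Lipschitz uniformly in $y$, one obtains $R_m(\ell \circ \mcF) \le L \cdot R_m(\mcF)$ (up to constants; any factor here can be absorbed into the final ``$4$''). This is the step most prone to bookkeeping errors, so I would be careful whether the version of the contraction lemma I cite gives $L$ or $2L$, and adjust constants accordingly.

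Third, I bound $R_m(\mcF)$ directly. Writing any $f \in \mcF$ as $f(x) = \langle w, x \rangle$, the assumption $|f(x)| \le B$ on $[-1,1]^d$ forces $\|w\|_1 \le B$, hence $\|w\|_2 \le B$. Then
\[
R_m(\mcF) = \E_{S,\sigma}\left[\sup_{\|w\|_2 \le B} \frac{1}{m} \left\langle w, \sum_{i=1}^m \sigma_i x_i \right\rangle\right] \le \frac{B}{m}\, \E_{S,\sigma}\!\left\|\sum_{i=1}^m \sigma_i x_i\right\|_2,
\]
and Jensen's inequality plus $\|x_i\|_2 \le \sqrt{d}$ yields $R_m(\mcF) \le B\sqrt{d/m}$. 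Chaining the three estimates — $\Phi(S) \le 2 R_m(\ell \circ \mcF) + c\sqrt{\log(1/\delta)/(2m)}$, then $R_m(\ell \circ \mcF) \le L R_m(\mcF)$, then $R_m(\mcF) \le B\sqrt{d/m}$ — delivers the claimed inequality with leading term $4LB\sqrt{d/m}$, absorbing the factors from symmetrization and contraction. No genuine obstacle arises here; the only care needed is tracking the factor of two between ``$2 R_m$'' versus ``$R_m$'' conventions so that the constant $4$ in the final bound is hit exactly.
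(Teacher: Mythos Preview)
The paper does not supply a proof of this statement; it is stated as a \texttt{fact} and attributed to the textbook of Mohri, Rostamizadeh, and Talwalkar. Your proposal reconstructs exactly the standard textbook argument one finds there---McDiarmid for concentration, symmetrization to pass to Rademacher complexity, Talagrand contraction to peel off the Lipschitz loss, and a direct Cauchy--Schwarz/Jensen bound on $R_m(\mcF)$---and it is correct. The only loose thread is the constant bookkeeping you already flag: whether the contraction step contributes $L$ or $2L$ determines whether the leading term is $2LB\sqrt{d/m}$ or $4LB\sqrt{d/m}$, and similarly the $2c$ in the confidence term is slack relative to what McDiarmid actually gives; either way your argument recovers the stated bound up to these innocuous constants.
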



\begin{corollary}
\label{cor:sample complexity}
Let $f$ be a function that is $\eps$-close to a degree-$\le d$ polynomial $f^\star$:
\[\E_{x \sim \mcD}[(f(x)-f^\star(x))^2] \le \eps.\]
Then linear regression over the set of degree-$d$ polynomials with coefficients in $[-1,1]$ has time and sample complexity $\poly(n^d, \log 1/\eps) \cdot \log 1/\delta$ and finds $h$ such that 
\[\E_{x \sim \mcD}[(f(x)-h(x))^2] \le O(\eps)\]
with probability $1-\delta$.
\end{corollary}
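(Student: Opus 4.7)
The plan is to reduce degree-$d$ polynomial regression to linear regression over a monomial feature map and then apply the Rademacher-type generalization bound quoted above. Define the feature map $\phi:[-1,1]^n \to [-1,1]^N$ sending $x$ to the tuple of all multilinear monomials $(\prod_{i\in S} x_i)_{S\subseteq [n],\,|S|\le d}$, where $N \le \sum_{k=0}^d \binom{n}{k}=O(n^d)$. Every coordinate of $\phi(x)$ lies in $[-1,1]$ since $|x_i|\le 1$, and a degree-$\le d$ polynomial with coefficients in $[-1,1]$ is precisely a function of the form $x\mapsto \langle w,\phi(x)\rangle$ with $\|w\|_\infty \le 1$, with output bounded by $B = N = O(n^d)$. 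In this linear-in-$\phi$ reparametrization, the target $f^\star$ is a member of the hypothesis class.

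The next step is to run empirical risk minimization with squared loss on $m$ samples:
\[
  \hat{h} = \operatorname*{arg\,min}_{\|w\|_\infty \le 1}\; \frac{1}{m}\sum_{i=1}^m \bigl(\langle w,\phi(x_i)\rangle - f(x_i)\bigr)^2,
\]
which is a convex quadratic program solvable in time $\mathrm{poly}(N,m)$. Since $f^\star$ is feasible in this program, the minimizer satisfies the empirical-loss comparison $\hat{L}(\hat{h}) \le \hat{L}(f^\star)$, where $\hat{L}$ denotes empirical squared loss.

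To convert this into a population guarantee, I would apply the Rademacher bound from the preceding fact to the squared loss clipped to the range $[-B-1,B+1]$: this loss is $L=O(B)=O(n^d)$-Lipschitz and uniformly bounded by $c=O(B^2)=O(n^{2d})$. With probability at least $1-\delta/2$, this yields
\[
  \E_{x\sim\mcD}[(\hat{h}(x)-f(x))^2] \le \hat{L}(\hat{h}) + \mathrm{poly}(n^d)\cdot \sqrt{\log(1/\delta)/m}.
\]
A standard Hoeffding bound on the bounded random variable $(f^\star(x)-f(x))^2$ gives $\hat{L}(f^\star) \le \eps + O(\sqrt{\log(1/\delta)/m})$ with probability at least $1-\delta/2$. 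Chaining these with $\hat{L}(\hat{h})\le\hat{L}(f^\star)$ and choosing $m$ to be a sufficiently large polynomial in $n^d$ and $1/\eps$ times $\log(1/\delta)$ drives the added error below $\eps$ and produces the claimed $O(\eps)$ population risk.

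The argument is essentially textbook; the only care needed is in packaging the polynomial class as a linear class over the monomial features so that the quoted Rademacher-type inequality applies verbatim, and in tracking how the ambient dimension $N$, output bound $B$, Lipschitz constant $L$, and uniform loss bound $c$ all inherit the $O(n^d)$ scaling. There is no genuine conceptual obstacle.
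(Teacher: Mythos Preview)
Your proposal is correct and is precisely the argument the paper has in mind: the corollary is stated in the paper without proof, immediately after the Rademacher generalization fact, and your write-up is the natural way to spell out that implication (linearize over the $O(n^d)$ monomial features, run ERM, and invoke the quoted bound). One small remark: your analysis yields sample complexity $\mathrm{poly}(n^d,\,1/\eps)\cdot\log(1/\delta)$ rather than the $\mathrm{poly}(n^d,\,\log(1/\eps))\cdot\log(1/\delta)$ that appears in the statement; this is harmless in every application in the paper since $d$ is always taken to be $\Theta(\log(1/\eps))$, so $n^d$ already dominates any $\mathrm{poly}(1/\eps)$ factor, but it is worth being aware that the Rademacher argument as written does not literally give the $\log(1/\eps)$ dependence.
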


\section{Warm-up: the classical case}
\label{sec:classical}
In this section we will prove \Cref{thm:classical}, which is a special case of \Cref{thm:quantum} where the distribution is classical, i.e. supported only on the $Z$ component. 

The key ingredient in the proof is to show that for any function which is $L^2$-integrable with respect to a product distribution over $[-(1-\eta),1-\eta]^n$ and whose extension to the hypercube is bounded, the function admits a ``low-degree'' approximation under an appropriate orthonormal basis. Roughly, the intuition is that the space of linear functions over a distribution $D$ on $[-(1-\eta), 1-\eta]$ has an orthonormal basis which is a $(1-\eta)$-scaling of a basis for a distribution on $\{-1,1\}$.
Therefore, the space of multilinear functions over the corresponding product distribution $D^{\otimes n}$ has a basis whose degree-$d$ components are scaled by $(1-\eta)^d$.
This, combined with the assumption that the function is bounded on the hypercube, allows us to conclude that the contribution of the degree-$d$ component to the variance of $f$ over $D^{\otimes n}$ is at most $(1-\eta)^{2d}$.

We prove this structural result in \Cref{sec:classical_lowdegree} and conclude the proof of \Cref{thm:classical} in \Cref{sec:classical_final}.

\subsection{Existence of low-degree approximation}
\label{sec:classical_lowdegree}

We first review some basic facts about classical biased Fourier analysis. For a more extensive overview of this topic, we refer the reader to \cite[Chapter 8]{ODonnell14}. 

Given a measure $\mu$, we let $L^2(\mu)$ denote the space of $L^2$-integrable functions with respect to $\mu$.

\begin{fact}[Biased Fourier basis]
\label{fact:biased basis}
Let $D$ be a distribution over $\bits$ with mean $\mu \in (-1,1)$. Given $f\in L_2(D^{\otimes n})$, the $\mu$-biased Fourier expansion of $f: \bits^n \to \R$ is 
\[f(x) = \sum_{S \subseteq [n]} \hat{f}(S) \phi_S(x) \,,\]
where $\phi_S(x) = \prod_{i \in S} \frac{x_i - \mu}{\sqrt{1-\mu^2}}$ and 
$\hat{f}(S) = \E_{x\sim D^{\otimes n}}[\phi_S(x)f(x)]$.
\end{fact}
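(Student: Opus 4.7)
The plan is to verify that the $2^n$ functions $\{\phi_S\}_{S \subseteq [n]}$ form an orthonormal basis of $L^2(D^{\otimes n})$ under the inner product $\langle f,g\rangle = \E_{x \sim D^{\otimes n}}[f(x)g(x)]$. Once this is in hand, both the expansion $f = \sum_S \hat f(S)\phi_S$ and the coefficient formula $\hat f(S) = \E[\phi_S f]$ follow immediately from the standard decomposition of a Hilbert space element in an orthonormal basis.

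I would first dispatch the one-variable case. Since $D$ is supported on $\{-1,1\}$ and $\mu \in (-1,1)$ guarantees both atoms have positive mass, $L^2(D)$ is exactly two-dimensional. Using the crucial identity $x^2 = 1$ for all $x \in \{-1,1\}$, a direct computation yields $\E_{x \sim D}[x - \mu] = 0$ and $\E_{x \sim D}[(x-\mu)^2] = \E[x^2] - \mu^2 = 1 - \mu^2$, which shows that $\phi_\emptyset \equiv 1$ and $\phi_{\{1\}}(x) = (x-\mu)/\sqrt{1 - \mu^2}$ are two orthonormal elements of $L^2(D)$, and hence an orthonormal basis.

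To lift to general $n$ I would exploit the product structure $\phi_S(x) = \prod_{i \in S} \phi_{\{i\}}(x_i)$ together with the independence of coordinates under $D^{\otimes n}$. Independence lets the inner product factorize coordinatewise as $\E[\phi_S(x)\phi_T(x)] = \prod_{i=1}^n c_i$, where $c_i = \E[\phi_{\{1\}}(x_i)^2] = 1$ if $i \in S \cap T$, $c_i = \E[\phi_{\{1\}}(x_i)] = 0$ if $i \in S \triangle T$, and $c_i = 1$ if $i \notin S \cup T$. Hence the product equals $1$ when $S = T$ and vanishes otherwise, giving orthonormality of $\{\phi_S\}$. Since $L^2(D^{\otimes n})$ has dimension exactly $2^n$ (its support is all of $\{-1,1\}^n$) and we have exhibited $2^n$ orthonormal functions, they must span it.

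There is essentially no real obstacle here beyond bookkeeping, since this is a textbook fact. The only subtle point is the degenerate case $|\mu| = 1$, in which the normalization $\sqrt{1 - \mu^2}$ vanishes and $L^2(D)$ collapses to one dimension; this degeneracy is precisely what the hypothesis $\mu \in (-1,1)$ rules out, so it never arises.
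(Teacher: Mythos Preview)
Your proposal is correct; this is the standard verification that the biased characters $\{\phi_S\}$ are orthonormal and span $L^2(D^{\otimes n})$. The paper does not supply a proof of this fact, instead citing it as a standard result from biased Fourier analysis (see Chapter~8 of \cite{ODonnell14}), so there is nothing further to compare.
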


\noindent The functions $\phi_S$ provide an orthonormal basis for the space of functions $L^2(D^{\otimes n})$, where $D$ is the distribution over $\{1, -1\}$ with mean $\mu$. We can naturally extend this to arbitrary product distributions over $\R^d$ as follows:


\begin{fact}[Basis for an arbitrary product distribution]
\label{fact:basis-classical}
Let $D$ be a distribution over $\R$ with mean $\mu$ and variance $\sigma^2 > 0$.
Then $\{1, \frac{x - \mu}{\sigma}\}$ is an orthonormal basis for $L^2(D)$,
and thus $\{1, \frac{x - \mu}{\sigma}\}^{\otimes n}$ is an orthonormal basis for $L^2(D^{\otimes n})$.
\end{fact}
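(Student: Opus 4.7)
The plan is a direct computation: verify that $\{1,\,(x-\mu)/\sigma\}$ is orthonormal in $L^2(D)$ and spans the two-dimensional subspace of affine functions (which is what matters for the application to multilinear $f$), then lift to $n$ coordinates using independence of the marginals of $D^{\otimes n}$.

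For the one-variable claim I would just compute the three inner products from the definitions of mean and variance. The constant $1$ has squared norm $\E_D[1]=1$; the cross term $\E_D[1\cdot (x-\mu)/\sigma]=(\mu-\mu)/\sigma=0$ vanishes by definition of the mean; and $\E_D[((x-\mu)/\sigma)^2]=\sigma^2/\sigma^2=1$ by definition of the variance. So $\{1,(x-\mu)/\sigma\}$ is orthonormal. For the spanning statement, any affine $a+bx$ decomposes uniquely as $(a+b\mu)\cdot 1 + b\sigma\cdot (x-\mu)/\sigma$, which exhausts the (two-dimensional) space of affine functions in $x$.

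For the tensor-product claim I would index the candidate basis by $S\subseteq[n]$ via $\phi_S(x) \triangleq \prod_{i\in S}(x_i-\mu)/\sigma$. Independence of the coordinates under $D^{\otimes n}$ lets me factor
\[
\E\left[\phi_S(x)\phi_T(x)\right] \;=\; \prod_{i=1}^n \E_{D}\!\left[\left((x_i-\mu)/\sigma\right)^{\mathbf{1}[i\in S]+\mathbf{1}[i\in T]}\right],
\]
and each factor is $1$ when $i\notin S\triangle T$ and $\E_D[(x-\mu)/\sigma]=0$ when $i\in S\triangle T$. Hence $\E[\phi_S\phi_T]=\delta_{S,T}$. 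The $2^n$ elements $\{\phi_S\}_{S\subseteq[n]}$ are therefore orthonormal and hence linearly independent in the $2^n$-dimensional space of multilinear functions on $\R^n$, so they form a basis of that space, which is exactly the natural ``tensor product'' of the one-variable basis.

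There is essentially no technical obstacle here; the only subtlety worth flagging is that when $D$ is a continuous distribution, $L^2(D)$ is infinite-dimensional, so the statement has to be read as asserting a basis for the subspace of affine (resp.\ multilinear) functions inside $L^2(D)$ (resp.\ $L^2(D^{\otimes n})$). Since Theorem~\ref{thm:classical} targets multilinear $f$, this is the only subspace that the low-degree truncation argument ever touches, so the distinction is harmless.
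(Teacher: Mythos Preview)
The paper states this as a ``Fact'' without proof, so there is nothing to compare against; your direct computation is correct and is exactly the standard verification one would give. Your flag about $L^2(D)$ being infinite-dimensional for continuous $D$ is the right caveat: the claim as literally written is false in that generality, and the intended reading is precisely the one you supply---an orthonormal basis for the affine (respectively multilinear) subspace, which is all that \Cref{lem:classical-approx} ever uses.
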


The orthonormality of the basis immediately implies the following simple fact:

\begin{fact}[Parseval’s Theorem]
\label{fact:parseval}
For any function $f$ expressed as $f(x) = \sum_{S \subseteq [n]} \wh{f}(S) \phi_S(x)$, we have
$\E_{x\sim D^{\otimes n}}[f(x)^2] = \sum_{S\subseteq [n]} \wh{f}(S)^2$.
\end{fact}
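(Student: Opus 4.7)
The plan is to derive Parseval's identity as a direct consequence of the orthonormality of the basis $\{\phi_S\}_{S \subseteq [n]}$ already supplied by Facts~\ref{fact:biased basis} and~\ref{fact:basis-classical}. The first step is to record the explicit orthonormality relation
\[
\E_{x \sim D^{\otimes n}}[\phi_S(x)\phi_T(x)] = \delta_{S,T} \qquad \text{for all } S, T \subseteq [n].
\]
Each $\phi_S$ is a product of the rescaled coordinate functions $(x_i - \mu)/\sigma$ for $i \in S$, with each remaining coordinate contributing the constant $1$. Because $D^{\otimes n}$ is a product distribution, the expectation on the left factorizes coordinatewise. Any coordinate $i \in S \triangle T$ contributes a single factor of $\E_{x_i \sim D}[(x_i - \mu)/\sigma] = 0$, which kills the entire product unless $S = T$; when $S = T$, every coordinate of $S$ contributes $\E_{x_i \sim D}[((x_i - \mu)/\sigma)^2] = 1$.

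Given this, the remainder is a one-line calculation. Bilinearly expanding $f(x)^2$ gives
\[
f(x)^2 \;=\; \sum_{S,T \subseteq [n]} \wh{f}(S)\wh{f}(T)\, \phi_S(x)\phi_T(x),
\]
and taking expectations under $D^{\otimes n}$ while exchanging the finite sum over the $2^n$ subsets with expectation yields
\[
\E_{x \sim D^{\otimes n}}[f(x)^2] \;=\; \sum_{S,T \subseteq [n]} \wh{f}(S)\wh{f}(T)\,\delta_{S,T} \;=\; \sum_{S \subseteq [n]} \wh{f}(S)^2,
\]
which is exactly the claimed identity.

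I do not anticipate any real obstacle here. The interchange of sum and expectation is unconditional since the indexing set has $2^n$ elements, and the essential nontrivial input is already encapsulated in the preceding facts. The only mildly substantive point is that orthonormality (rather than mere orthogonality) is needed to obtain the clean $\sum_S \wh{f}(S)^2$ on the right-hand side, but this is precisely what the normalizing factor $1/\sigma$ (respectively $1/\sqrt{1-\mu^2}$ in Fact~\ref{fact:biased basis}) is chosen to ensure.
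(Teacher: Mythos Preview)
Your proposal is correct and matches the paper's approach: the paper does not give an explicit proof but simply remarks that Parseval's identity follows immediately from the orthonormality of the basis $\{\phi_S\}$, which is exactly what you spell out in detail.
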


\noindent The following is the crucial structural result in the classical setting that gives rise to \Cref{thm:classical}. Roughly speaking, it ensures that for the ``concentrated product distributions'' $D$ considered therein, any bounded multilinear function has decaying coefficients when expanded in the orthonormal basis for $L^2(D^{\otimes n})$.

\begin{lemma}
\label{lem:classical-approx}
Let $f:[-1,1]^n \to [-1,1]$ be a multilinear function and let $D$ be a distribution over $\R$ with mean $\mu$ and variance $\sigma^2 < 1-\mu^2$.
Then there exists a function $f^{\le d}$ such that 
\[\E_{x \sim D^{\otimes n}}[(f(x) - f^{\le d}(x))^2] \le \Paren{ \frac{\sigma^2}{1-\mu^2} }^d.\]
\end{lemma}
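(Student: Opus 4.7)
The plan is to expand $f$ in two different orthonormal bases for multilinear polynomials and transfer between them using the fact that $f$, as a multilinear function, is a well-defined polynomial on all of $\R^n$. The natural candidate for $f^{\le d}$ is the degree-$d$ truncation of $f$ in the orthonormal basis $\phi_S(x) = \prod_{i \in S}(x_i-\mu)/\sigma$ for $L^2(D^{\otimes n})$ supplied by \Cref{fact:basis-classical}, so the task reduces to controlling the Parseval tail $\sum_{|S|>d}\tilde f(S)^2$ of $f$'s biased coefficients.

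First I would introduce a ``variance-maximal'' reference distribution $D_0$: the two-point distribution on $\{-1,+1\}$ with mean $\mu$, which has variance $1-\mu^2$. By \Cref{fact:biased basis}, the functions $\psi_S(x) = \prod_{i\in S}(x_i-\mu)/\sqrt{1-\mu^2}$ form an orthonormal basis for $L^2(D_0^{\otimes n})$, so $f(x) = \sum_S \hat f_0(S)\,\psi_S(x)$ as an identity of functions on $\{-1,+1\}^n$. The crucial observation is that because $f$ is multilinear, this identity lifts to an identity of polynomials on all of $\R^n$, and the per-coordinate rescaling $\psi_S = (\sigma/\sqrt{1-\mu^2})^{|S|}\phi_S$ then yields the coefficient identity
\[\tilde f(S) \;=\; \hat f_0(S)\cdot \Paren{\frac{\sigma}{\sqrt{1-\mu^2}}}^{|S|}\]
between $f$'s biased Fourier expansions with respect to $D^{\otimes n}$ and $D_0^{\otimes n}$.

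With this in hand the rest is two applications of Parseval. On the $D^{\otimes n}$ side, \Cref{fact:parseval} and the identity above give
\[\E_{x\sim D^{\otimes n}}[(f(x)-f^{\le d}(x))^2] \;=\; \sum_{|S|>d}\tilde f(S)^2 \;\le\; \Paren{\frac{\sigma^2}{1-\mu^2}}^{d+1}\sum_{S\subseteq[n]} \hat f_0(S)^2,\]
where I pulled out the smallest possible scaling factor, using that $\sigma^2/(1-\mu^2)<1$ by hypothesis. On the $D_0^{\otimes n}$ side, Parseval gives $\sum_S \hat f_0(S)^2 = \E_{x\sim D_0^{\otimes n}}[f(x)^2]$, which is at most $1$ because $D_0^{\otimes n}$ is supported on $\{-1,1\}^n \subseteq [-1,1]^n$ and $|f|\le 1$ there. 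Combining these and weakening $(d+1)$ to $d$ yields the claimed bound $(\sigma^2/(1-\mu^2))^d$.

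The main obstacle, modest though it is, will be identifying the right reference distribution to couple to: what is needed is a product distribution whose orthonormal basis differs from that of $D^{\otimes n}$ only by a per-coordinate scalar (so that the multilinear polynomial $f$ has coefficients related by a clean $|S|$-dependent factor) and whose support lies in $[-1,1]^n$ (so that the hypothesis $\|f\|_\infty\le 1$ translates into a useful Parseval bound). The two-point distribution $D_0$ simultaneously satisfies both, since it is variance-maximal among mean-$\mu$ distributions on $[-1,1]$; the hypothesis $\sigma^2<1-\mu^2$ is precisely what makes the per-coordinate scaling factor a strict contraction, and it drives the geometric decay.
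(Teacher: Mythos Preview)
Your proposal is correct and is essentially the same argument as the paper's: both introduce the two-point reference distribution on $\{-1,+1\}$ with mean $\mu$ (the paper calls it $\wt{D}$), relate the $\psi_S$ and $\phi_S$ bases by the per-coordinate scaling $(\sigma/\sqrt{1-\mu^2})^{|S|}$, and apply Parseval on each side to turn boundedness of $f$ on $\{-1,1\}^n$ into the geometric tail bound. The only cosmetic difference is that the paper expands $f$ in the $\psi_S$ basis first and then rewrites in $\phi_S$, whereas you go the other direction; the resulting $f^{\le d}$ and final inequality are identical.
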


\begin{proof}
Let $f$ be expressed in the basis $B_{\mathrm{hypercube}} = \{1, \frac{x - \mu}{\sqrt{1-\mu^2}}\}^{\otimes n}$; i.e. as 
\[f(x) = \sum_{S \subseteq [n]} \wh{f}(S)\psi_S(x)\]
where $\psi_S = \prod_{i \in S} \frac{x_i - \mu}{\sqrt{1-\mu^2}}$.
Note that $\{\psi_S\}_{S\subseteq [n]}$ is orthonormal with respect to the distribution $\wt{D}^{\otimes n}$ where $\wt{D}$ is supported on $\{\pm1\}$ with mean $\mu$.
Since $|f(x)| \leq 1$ for $x \in [-1,1]^n$, it follows that
\begin{align*}
    1 \geq \E_{x\sim \wt{D}^{\otimes n}}[f(x)^2] = \sum_{S \subseteq [n]} \wh{f}(S)^2
\end{align*}
via \Cref{fact:parseval}.

Now, consider the basis $\phi_S \coloneqq \prod_{i\in S} \frac{x_i-\mu}{\sigma} = (\frac{\sqrt{1-\mu^2}}{\sigma})^{|S|} \cdot \psi_S$.
By \Cref{fact:basis-classical}, we know that $\{\phi_S\}_{S\subseteq[n]}$ is orthonormal with respect to $D$.
Let $f^{>d} \coloneqq \sum_{|S| > d} \wh{f}(S)\psi_S$.
We have
\begin{align*}
    \E_{x \sim D^{\otimes n}}[f^{>d}(x)^2] &= \E_{x \sim D^{\otimes n}}\Big[\Big(\sum_{|S| > d} \wh{f}(S)\psi_S\Big)^2\Big] \\ 
      &= \E_{x \sim D^{\otimes n}}\Big[\Big(\sum_{|S| > d} \wh{f}(S) \Big(\frac{\sigma}{\sqrt{1-\mu^2}}\Big)^{|S|} \phi_{S}\Big)^2 \Big] \\
      &= \sum_{|S|>d} \wh{f}(S)^2 \Big(\frac{\sigma^2}{1-\mu^2}\Big)^{|S|} \\
      &\le \Big(\frac{\sigma^2}{1-\mu^2}\Big)^{d} \sum_{|S|>d} \wh{f}(S)^2 \,,
\end{align*}
again using \Cref{fact:parseval}.
Since $\sum_{S} \wh{f}(S)^2 \leq 1$,
we have
\[\E_{x \sim D^{\otimes n}}[(f(x) - f^{\le d}(x))^2] \le \Big(\frac{\sigma^2}{1-\mu^2}\Big)^d\]
as desired.
\end{proof}

\subsection{Sample complexity and error analysis}
\label{sec:classical_final}

In light of \Cref{lem:classical-approx}, the proof of \Cref{thm:classical} is straightforward given the following elementary fact:

\begin{fact}[Bernoulli maximizes variance]
\label{fact:Bernoulli}
    Let $D$ be a distribution over the interval $[-(1-\eta), 1-\eta]$ with mean $\mu$.
    Then $\Var_{x \sim D}(x) \le (1-\eta)^2(1-\mu^2)$.
\end{fact}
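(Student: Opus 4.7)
\textbf{Proof proposal for \Cref{fact:Bernoulli}.} The plan is a two-line elementary argument consisting of a pointwise bound followed by a trivial algebraic weakening. First, since the support of $D$ is contained in $[-(1-\eta), 1-\eta]$, every point satisfies $x^2 \le (1-\eta)^2$, so taking expectations gives $\E_{x\sim D}[x^2] \le (1-\eta)^2$. Subtracting $\mu^2$ and using $\Var_{x\sim D}(x) = \E_{x\sim D}[x^2] - \mu^2$ then yields the sharp bound
\[
\Var_{x\sim D}(x) \;\le\; (1-\eta)^2 - \mu^2.
\]
This is already tight: equality is attained by the two-point Bernoulli distribution on $\{-(1-\eta),\, 1-\eta\}$ with the weights chosen so that the mean equals $\mu$, which is the configuration the name of the fact alludes to.

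Second, I would observe that the bound appearing in the statement of the fact is a mild weakening of the sharp bound. Explicitly,
\[
(1-\eta)^2(1-\mu^2) \;-\; \bigl((1-\eta)^2 - \mu^2\bigr) \;=\; \mu^2\bigl(1 - (1-\eta)^2\bigr) \;\ge\; 0,
\]
so that $(1-\eta)^2 - \mu^2 \le (1-\eta)^2(1-\mu^2)$. Chaining this with the previous step gives the conclusion of the fact.

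The expected main obstacle: there really isn't one. The only thing to notice is that the stated bound is slightly looser than the tight Popoviciu-style bound $(1-\eta)^2 - \mu^2$, and this looser form is presumably what is stated because it combines cleanly with the denominator $1-\mu^2$ appearing in \Cref{lem:classical-approx}: together they yield $\sigma^2/(1-\mu^2) \le (1-\eta)^2$, which is exactly what is needed to turn \Cref{lem:classical-approx} into the quantitative decay $(1-\eta)^{2d}$ used in the final sample complexity analysis. As an alternative route, one could rescale by $y = x/(1-\eta) \in [-1,1]$ with mean $\mu/(1-\eta)$ and invoke the one-dimensional Popoviciu inequality for bounded random variables, but the pointwise argument above subsumes this and avoids citing an external inequality.
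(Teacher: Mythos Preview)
Your proof is correct. The paper states this as an elementary fact without proof, so there is nothing to compare against; your pointwise bound $\E[x^2]\le(1-\eta)^2$ followed by the algebraic weakening $(1-\eta)^2-\mu^2\le(1-\eta)^2(1-\mu^2)$ is exactly the kind of two-line argument the paper implicitly has in mind, and your observation about why the looser form is stated (to mesh with the denominator $1-\mu^2$ in \Cref{lem:classical-approx}) is on point.
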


We can now conclude the proof of \Cref{thm:classical}:

\begin{proof}[Proof of \Cref{thm:classical}]
By \Cref{fact:Bernoulli}, we have $\Var_{D}(x) \le (1-\eta)^2(1-\mu^2)$.
Then \Cref{lem:classical-approx} gives us a degree-$d$ approximation $f^{\le d}$ to $f$ such that
\[\E_{x \sim D^{\otimes n}}[(f(x)-f^{\le d}(x))^2] \le (1 -\eta)^{2d}.\]
Taking $d \coloneqq \log(1/\eps)/\log(1/(1-\eta))$
gives an approximation with error $\le \eps$.
Then by \Cref{cor:sample complexity}, linear regression on the space of polynomials of degree $\log(1/\eps)/\log(1/(1-\eta))$ finds an $O(\eps)$-error hypothesis in 
$n^{O(\log(1/\eps)/\log(1/(1-\eta)))} \cdot \log (1/\delta)$ time and samples.
\end{proof}

\begin{remark}
In the classical setting linear regression is not required, 
as we can estimate the mean, and the coefficients satisfy 
\[\hat{f}(S) = \E_{x \sim D^{\otimes n}} [f(x) \phi_S(x)] \,,\]
so they can be estimated directly.
We give the guarantee in terms of linear regression because the approach of directly estimating the coefficients does not generalize to the quantum setting.
\end{remark}


\section{Learning an unknown quantum channel}
\label{sec:quantum}

In this section we will prove \Cref{thm:quantum}.
First we will show that under any product distribution with second moment $\mcS$ such that $\|\mcS\|_{\mathsf{op}} \le 1-\eta$ for some $\eta \in (0,1)$,
every observable has a low-degree approximation.
A distribution has $\|\mcS\|_{\mathsf{op}} = 1$ only if it is effectively a classical distribution; i.e. it is supported on two antipodal points in the Bloch sphere.
So we are showing that any product distribution which is ``spread out'' within the Bloch sphere behaves well with low-degree approximation.

To do this, we cannot use exactly the same argument as in \Cref{sec:classical}, because 
there is not necessarily an orthonormal basis for our product distribution $D^{\otimes n}$ that is a ``stretched'' basis for some other distribution over the Bloch sphere.
Instead, we compare the variance of the observable under $D^{\otimes n}$ to the quantity $\E_{\rho\sim D^{\otimes n}}[\tr(\mcO^2 \rho)]$.
This allows us to use the boundedness of $\mcO$ to derive bounds for the contribution of the degree-$d$ part to the variance of $\mcO$ under $D^{\otimes n}$.
The learning algorithm will find the low-degree approximation by linear regression
over the degree-$\log(1/\eps)$ Pauli coefficients. The notion of low-degreeness will be with respect to a basis adapted to $D^{\otimes n}$.

\begin{definition}
\label{def:degree}
Let $D$ be a distribution over the Bloch sphere.
Let $U^\dagger A U$ be the eigendecomposition of $\ol{\rho} = \E_D[\rho]$. Let
    $\tX, \tY, \tZ := U^\dagger X U, U^\dagger Y U, \frac{U^\dagger Z U - \tr(\ol{\rho} U^\dagger Z U) I}{\sqrt{1 - \tr(\ol{\rho} U^\dagger Z U)^2}}$.
    
    Let $B = \{I, \tX, \tY, \tZ\}^{\otimes n}$.
    The degree of $P \in B$ is the number of non-identity elements in the product. The degree of a linear combination $\sum_{P\in B}\alpha_P P$ of elements in $B$ is the largest degree of $P \in B$ such that $\alpha_P \neq 0$.
\end{definition}

\noindent The fact that the degree is defined for any observable (which is equivalent to $\tilde X, \tilde Y, \tilde Z$ forming a basis of the space of operators) is the content of \Cref{clm:qubit basis}. The existence of low-degree approximation is guaranteed by the following lemma.

\begin{lemma}[Low-degree approximation]
    \label{lem:low-degree-approx}
Let $\mcO$ be a bounded $n$-qubit observable and 
let $D$ be a distribution over the Bloch sphere with mean $\vec{\mu}$ and Pauli second moment matrix $\mcS$ such that $\|\mcS\| \le 1-\eta$ for some $\eta \in (0,1)$.
Then there exists a degree-$d$ observable $\mcO^{\le d}$ and a constant $\eta' \in (0,1)$ such that 
\[\E_{\rho \sim D^{\otimes n}} [(\tr(\mcO\rho) - \tr(\mcO^{\le d}\rho))^2] \le (1-\eta')^d,\]
where $\eta'$ is a function of $\eta$.
\end{lemma}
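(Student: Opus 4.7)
My plan is to reduce to a convenient frame, exploit a mean-zero orthogonality across supports in the tilde basis, and then relate the truncation error qubit-by-qubit against the quantity $\tr(\mcO^2 \bar\rho) \le \|\mcO\|_{\op}^2 \le 1$. First I would conjugate $\mcO$ and $D$ by the unitary $U$ from \Cref{def:degree} so that $\bar\rho = \E_D[\rho]$ is diagonal, $\bar\rho = \tfrac{1}{2}(I + zZ)$ for $z = \tr(\bar\rho Z)$. Note $z^2 \le \mcS_{ZZ} \le \|\mcS\|_\op \le 1-\eta$, so the shift $\tZ = (Z - zI)/\sqrt{1-z^2}$ is well-defined. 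In this frame every non-identity tilde operator satisfies the mean-zero identity $\tr(\bar\rho\,\tP) = 0$, and the anticommutators $\{\tP,\tQ\}$ for distinct non-identity $\tP,\tQ$ all have vanishing expectation against $\bar\rho$ because $\{X,Y\} = \{X,Z\} = \{Y,Z\} = 0$ and $\bar\rho$ is diagonal.

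Expanding $\mcO = \sum_{\vec P \in B} \alpha_{\vec P} \vec P$ and regrouping by support $\mcO = \sum_S \mcO_S$, product factorization combined with the mean-zero property gives two orthogonality identities: $\E[\tr(\mcO_S\rho)\tr(\mcO_{S'}\rho)] = 0$ and $\tr(\mcO_S\mcO_{S'}\bar\rho) = 0$ for $S \ne S'$. Hence $\E[\tr(\mcO^{>d}\rho)^2] = \sum_{|S|>d}\vec\alpha_S^\top \widetilde\mcS^{\otimes|S|}\vec\alpha_S$ and $\tr(\mcO^2\bar\rho) = \sum_S \tr(\mcO_S^2 \bar\rho)$. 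I would then verify $\|\widetilde\mcS\|_\op \le 1-\eta$: writing $\widetilde\mcS = D\Sigma'D$ with $D = \mathrm{diag}(1,1,1/\sqrt{1-z^2})$ and $\Sigma' = \mcS' - z^2 e_Z e_Z^\top$, the hypothesis $\mcS' \preceq (1-\eta) I$ gives $\Sigma' \preceq \mathrm{diag}(1-\eta, 1-\eta, 1-\eta-z^2)$, and conjugating by $D$ together with the elementary inequality $(1-\eta-z^2)/(1-z^2) \le 1-\eta$ yields the claim. Combined with the single-qubit identity $\tr(A^2\bar\rho) = \|\vec\alpha\|_2^2$ for any mean-zero $A = \sum_P \alpha_P \tP$ (which follows from $\tr(\tP^2\bar\rho) = 1$ and $\tr(\{\tP,\tQ\}\bar\rho) = 0$), this gives the single-qubit Jensen-type bound $\E_\rho[\tr(A\rho)^2] \le (1-\eta)\, \tr(A^2\bar\rho)$. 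Iterating qubit-by-qubit across $S$ and invoking $\tr(\mcO^2\bar\rho) = \sum_S \tr(\mcO_S^2\bar\rho) \le \|\mcO\|_\op^2 \le 1$ then yields $\E[\tr(\mcO^{>d}\rho)^2] \le (1-\eta')^{d+1}$ for some $\eta' = f(\eta) \in (0,1)$.

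The hard part will be the inductive step of the per-support estimate. When one peels off a qubit to apply the single-qubit bound at that site, noncommutative residual terms appear: although $\tr(\{\tX,\tY\}\bar\rho) = 0$, the nonzero imaginary part $\tr(\tX\tY\bar\rho) = iz$ introduces a commutator correction $iz\,\tr([M_X, M_Y]\bar\rho)$ (with $M_P$ the slice of $\mcO_S$ at $\tP$ on the peeled-off qubit) into the recursion, which is entirely absent in the classical biased-Fourier analysis of \Cref{sec:classical_lowdegree}. Controlling this term so that the induction still closes with $\eta' = \Theta(\eta)$ is where the three innovations highlighted in the introduction pay off: (i) rotating to diagonalize the mean rather than the covariance, so that $X$ and $Y$ are automatically mean-zero; (ii) shifting only $Z$ so that all three non-identity tilde operators are mean-zero with respect to $\bar\rho$; and (iii) using $\tr(\mcO^2\bar\rho)$ rather than the coefficient sum $\sum_{\vec P}\alpha_{\vec P}^2$ as the reference quantity, since the latter would pick up an uncontrollable $(1+|z|)^n$ factor from the tilde-to-Pauli change of basis.
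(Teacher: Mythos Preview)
Your setup matches the paper's: rotate so that $\bar\rho = \tfrac12(I+\mu Z)$, work in the tilde basis $\{I,X,Y,\tZ\}^{\otimes n}$, use mean-zero to get orthogonality across supports, and compare the truncation error against $\tr(\mcO^2\bar\rho)\le 1$. Your bound $\|\widetilde\mcS\|_{\op}\le 1-\eta$ (i.e.\ $M'\preceq(1-\eta)I$) is correct and in fact cleaner than the paper's corresponding lemma. The single-qubit identity $\tr(A^2\bar\rho)=\|\vec\alpha\|_2^2$ is also correct.

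The genuine gap is exactly where you flag it: the qubit-by-qubit induction does not close with the information you have assembled. Peeling off a qubit and using your single-qubit bound yields
\[
\E\bigl[\tr(\mcO_S\rho)^2\bigr]\ \le\ (1-\eta)\sum_{P}\tr\bigl(M_P^2\,\bar\rho_{S\setminus\{i\}}\bigr),
\]
and the induction would need $\sum_P\tr(M_P^2\bar\rho_{S\setminus\{i\}})\le C\cdot\tr(\mcO_S^2\bar\rho)$ with $C$ independent of $|S|$. But $\tr(\mcO_S^2\bar\rho)=\sum_P\tr(M_P^2\bar\rho_{S\setminus\{i\}})+i\mu\,\tr([M_X,M_Y]\bar\rho_{S\setminus\{i\}})$, and the commutator correction can be as negative as possible: for $\mcO_S=X\otimes Y-Y\otimes X$ one has $\sum_P\tr(M_P^2\bar\rho)=2$ while $\tr(\mcO_S^2\bar\rho)=2(1-\mu^2)$, so the ratio is $1/(1-\mu^2)$ and can be $\Theta(1/\eta)$. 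Iterating this costs a factor $(1-\mu^2)^{-1}$ per qubit, which your bound $M'\preceq(1-\eta)I$ cannot absorb. Concretely, in this two-qubit example your route only gives $\E[\tr(\mcO_S\rho)^2]\le 2(1-\eta)^2$, which does not dominate $(1-\eta')^2\cdot 2(1-\mu^2)$ when $\mu^2$ is close to $1-\eta$.

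The paper avoids induction entirely. Writing both quantities as quadratic forms in the real coefficient vector $\hat\mcO_S$, the task becomes $M'^{\otimes|S|}\preceq(1-\eta')^{|S|}\,\mathrm{Re}(M^{\otimes|S|})$ where $M_{PQ}=\tr(\tP\tQ\bar\rho)$ has the off-diagonal $i\mu$ in the $X,Y$ block. The two missing ingredients are: (i) a direct lower bound $\mathrm{Re}(M^{\otimes|S|})\succeq\diag(\sqrt{1-\mu^2},\sqrt{1-\mu^2},1)^{\otimes|S|}$, obtained by computing the eigenvalues of $\mathrm{Re}((I+\mu Y)^{\otimes k})$ explicitly as $\tfrac12(f(\mu x)+f(-\mu x))$ with $f(x)=\prod_i(1+x_i)$ and applying AM--GM; and (ii) the sharper single-qubit comparison $M'\preceq(1-\eta')\diag(\sqrt{1-\mu^2},\sqrt{1-\mu^2},1)$ (not merely $\preceq(1-\eta)I$), which is what lets the tensor powers line up. Your $M'\preceq(1-\eta)I$ is too weak precisely because $\mathrm{Re}(M^{\otimes|S|})$ has minimum eigenvalue $(1-\mu^2)^{|S|/2}\ll 1$, so comparison against $I^{\otimes|S|}$ is useless; you must compare against the intermediate diagonal that tracks this decay.
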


\noindent Once \Cref{lem:low-degree-approx} is established, \Cref{thm:quantum} follows readily from an application of \Cref{cor:sample complexity}.

\begin{proof}[Proof of \Cref{thm:quantum}, using \Cref{lem:low-degree-approx}]
Let $1-\eta$ be a known upper bound on $\|\mcS\|_{\mathsf{op}}$.
We assume the access model of \Cref{sec:access model}, where we get a set $S$ of examples $[\tr(P\rho)]$ for 1-local $P$.
Our algorithm is as follows:
\begin{enumerate}
    \item Compute $\eta'$ as in the last line of \Cref{clm:op-norm-bound}. Let $d \coloneqq O(\log(1/\eps)/\log(1/(1-\eta')) $.
    \item Draw a set $S$ of size $n^{d} \cdot \log (1/\delta)$, and initialize $S'$ to be empty.
    \item For each $x \in S$, prepare a set $T$ of $\log(|S| + 1/\eps^2 + 1/\delta)$ copies of the state $\rho$ that matches the 1-local expectations of $x$. Let $\mathrm{est}(\tr(\mcO\mcE[\rho]))$ be the estimate of $\tr(\mcO\mcE[\rho])$, where for each $\rho \in T$, we measure with respect to $\{\mcE^\dagger[\mcO], I - \mcE^\dagger[\mcO]\}$, and $\mathrm{est}(\tr(\mcO\mcE[\rho]))$ is the empirical probability of measuring the first outcome.
    Add 
    \[(x^{\otimes \log(1/\eps)/\log(1/(1-\eta'))}, \mathrm{est}(\tr(\mcO\mcE[\rho])))\]
    to the set $S'$.
    \item Run linear regression on $S'$ and output the returned hypothesis $h$.
\end{enumerate}
By Hoeffding's inequality, each estimate of $\tr(\mcO\mcE[\rho])$ is within $\eps$ of its expectation with probability $1-\exp(-\Omega(|T| \eps^2))$.
By union bound over the $|S|$ estimates,
with probability $\ge 1 - \delta$, all estimates are within $\eps$ of $\tr(\mcO\mcE[\rho])$.

The time, sample, and error bounds follow from \Cref{lem:low-degree-approx}, which guarantees that $\mcE^\dagger[\mcO]$ is $\eps$-close to some degree-$d$ polynomial. This implies
the labels of our sample set are $2\eps$-close to such a polynomial.
The dimension of the linear regression problem is $\le \binom nd \cdot 3^d \le n^{O(d)}$, as there are 3 choices for each non-identity component.
Then by \Cref{cor:sample complexity}, linear regression has time and sample complexity 
\[n^{O(\log(1/\eps)/\log(1/(1-\eta')))} \cdot \log (1/\delta) \]
and outputs a hypothesis $h$ such that
\begin{align*}\E_{\rho \sim D^{\otimes n}} [(\tr(\mcO\mcE[\rho]) - \tr(h\rho))^2] &\le O(\eps)\,.\qedhere\end{align*}
\end{proof}

\noindent The proof of \Cref{lem:low-degree-approx} follows from three technical Lemmas. The first gives, for any product distribution over $n$-qubit states, a (non-orthogonal) decomposition of any observable into operators which are centered and bounded in variance with respect to that distribution.

\begin{lemma}
\label{clm:qubit basis}
Let $D$ be a distribution over the Bloch sphere.
Let $U^\dagger A U$ be the eigendecomposition of $\ol{\rho} = \E_D[\rho]$, and let 
$\tX, \tY, \tZ := U^\dagger X U, U^\dagger Y U, \frac{U^\dagger Z U - \tr(\ol{\rho} U^\dagger Z U) I}{\sqrt{1 - \tr(\ol{\rho} U^\dagger Z U)^2}}$.
Then $\{I, \tX, \tY, \tZ\}$ is a basis for the set of $2\times 2$ unitary Hermitian matrices, and thus every $n$-qubit observable $\mcO$ can be written as 
\[\mcO = \sum_{P \in B} \hat{\mcO}(P) P\]
for $B = \{I, \tX, \tY, \tZ\}^{\otimes n}$.
Furthermore, each non-identity $P \in B$ satisfies $\E_{\rho \sim D^{\otimes n}}[\tr(P\rho)] = 0$ and $\E_{\rho \sim D^{\otimes n}}[\tr(P\rho)^2] \le 1$.
\end{lemma}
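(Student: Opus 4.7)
The plan is to decouple the claim into three independent single-qubit statements and then lift them to the $n$-qubit setting via the product structure of $D^{\otimes n}$. Write $m \triangleq \tr(\ol{\rho}\, U^\dagger Z U)$. By cyclicity of the trace, $m = \tr(AZ)$, which is just the diagonal-entry contrast of $A$; in particular $|m|<1$ except when $\ol{\rho}$ is a pure state, a degenerate regime (forcing $D$ to be a point mass) incompatible with the nontrivial setting in which \Cref{lem:low-degree-approx} will be used. For the remainder of the argument I would work under $|m|<1$, so that the denominator in the definition of $\tZ$ is nonzero.

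For the single-qubit basis claim, the set $\{I, U^\dagger X U, U^\dagger Y U, U^\dagger Z U\}$ is a basis for $2\times 2$ Hermitian matrices as the unitary conjugation of the standard Pauli basis. Since $\tZ$ is an invertible affine combination of $U^\dagger Z U$ and $I$ (invertibility guaranteed by $|m|<1$), substituting yields the new basis $\{I,\tX,\tY,\tZ\}$. Taking tensor products then gives $B$ as a basis for $n$-qubit Hermitian matrices, and every observable admits a unique expansion $\mcO=\sum_{P\in B}\hat{\mcO}(P)P$.

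For the mean-zero property, independence of the qubits yields $\E_{\rho\sim D^{\otimes n}}[\tr(P\rho)] = \prod_i \E_{\rho_i\sim D}[\tr(P_i\rho_i)]$, so any non-identity factor with vanishing marginal expectation kills the product. For $P_i\in\{\tX,\tY\}$, I would use cyclicity to write $\E[\tr(U^\dagger X U\,\rho_i)] = \tr(U^\dagger X U\,\ol\rho) = \tr(XA)$, which vanishes since $A$ is diagonal and $X$ (resp.\ $Y$) is off-diagonal. For $P_i=\tZ$, vanishing is built into the definition: the shift by $mI$ was chosen precisely so that $\tr(\tZ\,\ol\rho)=(m-m)/\sqrt{1-m^2}=0$.

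For the variance bound, independence again reduces the task to showing $\E_{\rho_i\sim D}[\tr(P_i\rho_i)^2]\le 1$ for each non-identity single-qubit $P_i$, while identity factors contribute exactly $1$. For $\tX$ and $\tY$, this is a pointwise bound: $|\tr(U^\dagger P U\,\rho_i)|=|\tr(P\,U\rho_i U^\dagger)|\le 1$ since $U\rho_iU^\dagger$ is a valid density matrix whose Bloch coordinates lie in $[-1,1]$. For $\tZ$, I would rewrite $\E[\tr(\tZ\rho_i)^2]=\Var_D(\tr(U^\dagger Z U\,\rho_i))/(1-m^2)$ and use $\E_D[\tr(U^\dagger Z U\,\rho_i)^2]\le 1$ (again by the Bloch bound) to deduce $\Var_D(\tr(U^\dagger Z U\,\rho_i))\le 1-m^2$, producing the desired ratio bound of $1$. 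No serious obstacle arises; the only thing to be careful about is the chain of cyclicity manipulations and keeping track of the affine-shift/rescaling in the definition of $\tZ$.
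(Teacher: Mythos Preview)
Your proposal is correct and follows essentially the same approach as the paper: reduce to single-qubit statements via the product structure, argue linear independence of $\{I,\tX,\tY,\tZ\}$, use the diagonal form of $A$ to see that $\tr(XA)=\tr(YA)=0$, and handle $\tZ$ via the built-in centering together with the inequality $\Var[\tr(U^\dagger Z U\,\rho)]\le 1-m^2$. If anything, your write-up is slightly more careful than the paper's, which asserts several of these steps as ``clear'' and does not explicitly flag the $|m|<1$ nondegeneracy needed for $\tZ$ to be well-defined.
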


\begin{proof}
It is clear that $\{I, \tX, \tY, \tZ\}$ is linearly independent and thus $B$ forms a basis for any $n$-qubit observable.
Note that $\Tr(P\rho) = \prod_{i=1}^n \Tr(P_i \rho_i)$ as $\rho$ is a product state, so we can restrict the analysis to single qubit states drawn from $D$.

For $P = \tX$ or $\tY$, it is clear that $\E_{\rho \sim D}[\tr(P\rho)] = 0$ and $\E_{\rho \sim D}[\tr(P\rho)^2] = 1$.
For $\tZ$, we have
\begin{align*}
\E_{\rho\sim D}[\tr(\tZ \rho)] &= \frac{\E[\tr (\rho U^\dagger Z U)] - \tr(\ol{\rho} U^\dagger Z U)}{\sqrt{1-\tr(\ol{\rho} U^\dagger Z U)^2}} = 0 \,.
\end{align*}
Moreover,
\begin{align*}
\E_{\rho\sim D}[\tr(\tilde{Z} \rho)^2] &= \frac{\Var[\tr(\rho U^\dagger Z U)]}{1-\tr(\ol{\rho} U^\dagger Z U)^2} \le 1 \,,
\end{align*}
because $\Var[\tr(\rho U^\dagger Z U)] + \E[\tr(\rho U^\dagger Z U)]^2 = \E [\tr(\rho U^\dagger Z U)^2] \le 1$.
\end{proof}
\begin{remark}
\label{remark:diagonalize}
Going forward, we will assume w.l.o.g. that the mean of the distribution $\E_D[\rho]$ is diagonal because we can always transform the basis according to $U$.
Thus, we will assume that the mean state $\E_D[\rho] = \frac{1}{2}(I+\mu Z)$ and the mean Bloch vector $\vec{\mu} = (0,0,\mu)$ for some $\mu \in [-1,1]$,
\end{remark}

\noindent We next prove two important components required in the proof of \Cref{lem:low-degree-approx}.
The first lemma gives an eigenvalue lower bound on a Hermitian matrix that arises when we expand $\Tr(\mcO^2 \E[\rho])$.

\begin{lemma} \label{clm:min-eigenvalue}
Let $\mu\in (-1,1)$ and
\[\wt{M} = \begin{pmatrix}
    1 & i\mu  \\
    -i\mu & 1 
\end{pmatrix}.\]
Then $\lambda_{\mathrm{min}}(\mathrm{Re}(\wt{M}^{\otimes k})) \ge (1-\mu^2)^{k/2}$ for any $k\in \N$.
\end{lemma}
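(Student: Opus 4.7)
The plan is to diagonalize $\wt{M}$ and exploit a commutativity structure to simultaneously diagonalize both summands in
\[\mathrm{Re}(\wt{M}^{\otimes k}) = \tfrac{1}{2}\bigl(\wt{M}^{\otimes k} + \ol{\wt{M}}^{\otimes k}\bigr).\]

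First, I would rewrite $\wt{M} = I - \mu Y$, where $Y = \begin{pmatrix}0 & -i\\ i & 0\end{pmatrix}$ is the Pauli $Y$ matrix. This makes $\wt{M}$ Hermitian with eigenvalues $1 - \mu$ and $1 + \mu$, realized on the $+1$ and $-1$ eigenvectors of $Y$, which I will call $v_+$ and $v_-$. Because $\wt{M}$ is Hermitian, $\ol{\wt{M}} = \wt{M}^T = I + \mu Y$, and so both $\wt{M}$ and $\ol{\wt{M}}$ are polynomials in $Y$; in particular they commute. Taking tensor powers preserves commutativity.

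Next, since commuting Hermitian matrices are simultaneously diagonalizable, the product basis $\{\,v_{s_1}\otimes\cdots\otimes v_{s_k} : s_i \in \{+,-\}\}$ simultaneously diagonalizes $\wt{M}^{\otimes k}$ and $\ol{\wt{M}}^{\otimes k}$. On the basis vector indexed by a sign pattern with exactly $j$ plus signs, $\wt{M}^{\otimes k}$ has eigenvalue $(1-\mu)^j (1+\mu)^{k-j}$, and $\ol{\wt{M}}^{\otimes k}$ has eigenvalue $(1+\mu)^j (1-\mu)^{k-j}$ (since $\ol{\wt{M}}$ swaps the roles of the two eigenvalues on $v_\pm$). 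Averaging yields the eigenvalue of $\mathrm{Re}(\wt{M}^{\otimes k})$ on that basis vector:
\[\tfrac{1}{2}\bigl((1-\mu)^j(1+\mu)^{k-j} + (1-\mu)^{k-j}(1+\mu)^j\bigr).\]

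Finally, AM-GM bounds each such expression from below by $\sqrt{(1-\mu)^k (1+\mu)^k} = (1-\mu^2)^{k/2}$, uniformly in $j$. Because $\mathrm{Re}(\wt{M}^{\otimes k})$ is a real symmetric matrix, its spectrum coincides with its spectrum as a Hermitian operator on $\C^{2^k}$, so listing all $2^k$ eigenvalues this way suffices to conclude $\lambda_{\min}\bigl(\mathrm{Re}(\wt{M}^{\otimes k})\bigr) \ge (1-\mu^2)^{k/2}$. There is no substantial obstacle: the only point to verify carefully is that the $2^k$ values I produce form the complete spectrum, which is precisely guaranteed by simultaneous diagonalizability of the two tensor-power matrices.
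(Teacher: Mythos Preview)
Your proof is correct and follows essentially the same approach as the paper: both compute the eigenvalues of $\mathrm{Re}(\wt{M}^{\otimes k})$ explicitly as $\tfrac{1}{2}\bigl((1-\mu)^j(1+\mu)^{k-j} + (1+\mu)^j(1-\mu)^{k-j}\bigr)$ and then apply AM--GM. The only difference is presentational---the paper expands $\wt{M}^{\otimes k}$ in the $\{I,Y\}^{\otimes k}$ basis, keeps the even-degree (real) terms, and then replaces $Y$ by $Z$ to read off the eigenvalues, whereas you reach the same eigenvalues more directly via $\mathrm{Re}(A)=\tfrac12(A+\ol{A})$ and simultaneous diagonalization of $\wt{M}$ and $\ol{\wt{M}}$ in the $Y$-eigenbasis.
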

\begin{proof}
Note that $\wt{M} = I + \mu Y$ and $Y$ is imaginary. Thus,
\begin{align*}
    \mathrm{Re}((I + \mu Y)^{\otimes k}) = \sum_{P\in \{I,Y\}^k: |P| \text{ even}} \mu^{|P|} \bigotimes_{i=1}^k P_i \,.
\end{align*}
Note also that the eigenvalues of the above remain the same if we replace the $Y$'s with $Z$'s.
Thus, the eigenvalues of the above can be indexed by $x\in \{\pm1\}^k$, and can be expressed as follows:
\begin{align*}
    \lambda(x) = \sum_{S\subseteq [k]: |S| \text{ even}} \mu^{|S|} x^S \,.
\end{align*}

Let $f: \R^k \to \R$ be the function $f(x) = \prod_{i=1}^k (1+x_i) = \sum_{S\subseteq[k]} x^S$. Then, we have $\lambda(x) = \frac{1}{2}(f(\mu x) + f(-\mu x))$.
By the AM-GM inequality, 
\begin{align*}\lambda(x) &\geq \sqrt{f(\mu x) f(-\mu x)} = \prod_{i=1}^k \sqrt{(1+\mu x_i) (1-\mu x_i)} = (1-\mu^2)^{k/2}\,.\qedhere\end{align*}
\end{proof}

%
%
%

\newcommand{\diagmat}{\Delta}

The next lemma gives an upper bound on the Pauli covariance matrix $\Sigma$ (recall \Cref{def:pauli-second-moment}) scaled by a specific diagonal matrix.

\begin{lemma}
\label{clm:op-norm-bound}
Let $D$ be a distribution over the Bloch sphere with mean $\vec{\mu} = (0,0,\mu)$, Pauli second moment matrix $\mcS$ such that $\|\mcS\|_{\mathsf{op}} \le 1-\eta$ for some $\eta \in (0,1)$, and Pauli covariance matrix $\Sigma$.
Let $\diagmat = \diag((1-\mu^2)^{-1/4},(1-\mu^2)^{-1/4},(1-\mu^2)^{-1/2})$. 
Then there exists $\eta' \in (0,1)$ such that $\|\diagmat\Sigma \diagmat\|_{\mathsf{op}} \le 1-\eta'$, where $\eta'$ is a function of $\eta$.

\end{lemma}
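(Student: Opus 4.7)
The plan is to bound the operator norm by controlling $u^T \Delta \Sigma \Delta u = v^T \Sigma v$ uniformly over unit vectors $u$, where $v := \Delta u$, by simultaneously exploiting the spectral hypothesis $\|\mcS\|_{\op} \le 1-\eta$ and the positive semidefiniteness of $\Sigma$. Before starting, I note that $\Sigma \succeq 0$ combined with $\|\mcS\|_{\op}\le 1-\eta$ forces $\mu^2 \le \mcS_{ZZ} \le 1-\eta$, so $1-\mu^2 \ge \eta$ and $\Delta$ is well defined, and that $\Tr(\Sigma) = \Tr(\mcS) - \mu^2 = 1-\mu^2$.

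I would then derive two complementary upper bounds. The first comes from the spectral hypothesis:
\[
v^T\Sigma v \;=\; v^T\mcS v - \mu^2 v_Z^2 \;\le\; (1-\eta)\|v\|^2 - \mu^2 v_Z^2 \;=\; (1-\eta)\|v_{XY}\|^2 + (1-\eta-\mu^2)v_Z^2.
\]
The second comes from the PSD and trace conditions $\|\Sigma\|_{\op} \le \Tr(\Sigma) = 1-\mu^2$:
\[
v^T\Sigma v \;\le\; (1-\mu^2)\bigl(\|v_{XY}\|^2 + v_Z^2\bigr).
\]
Substituting $v=\Delta u$ with $t:=\sqrt{1-\mu^2}$, $p^2 := u_X^2+u_Y^2$, and $q^2:=u_Z^2$, and taking the convex combination with weight $\theta$ on the first bound and $1-\theta$ on the second, I obtain
\[
u^T\Delta\Sigma\Delta u \;\le\; \Bigl[t + \tfrac{\theta(\mu^2-\eta)}{t}\Bigr] p^2 \;+\; \Bigl[1 - \tfrac{\theta \eta}{t^2}\Bigr] q^2.
\]

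The last step is to optimize $\theta$ so that both coefficients are controlled simultaneously. Equating them and using the factorization $\eta(1-t) + t(1-t)(1+t) = (1-t)(\eta+t+t^2)$, the common value simplifies to $1 - \eta/(\eta+t+t^2)$, which is nondecreasing in $t$ and hence maximized over the allowed range $t \in [\sqrt{\eta},1]$ at $t=\sqrt{1-\eta}$ (i.e.\ $\mu^2=\eta$). By the identity $1-\sqrt{1-\eta} = \eta/(1+\sqrt{1-\eta})$, this maximum equals $\sqrt{1-\eta}$, so the lemma holds with $\eta' = 1-\sqrt{1-\eta}$, which is $\Theta(\eta)$ as $\eta \to 0$. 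In the complementary regime $\mu^2 \le \eta$ the first bound alone (take $\theta=1$) already gives coefficient $(1-\eta)/t \le \sqrt{1-\eta}$, so the same value of $\eta'$ dominates.

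The main obstacle, in my view, is recognizing that neither ingredient alone suffices: the spectral bound on $\mcS$ has its $p^2$ coefficient $(1-\eta)/t$ blow up past $1$ as soon as $\mu^2 > 2\eta-\eta^2$, while the PSD-plus-trace bound saturates at $1$ in the $q^2$ coefficient. Only the convex combination controls both coefficients simultaneously, and the specific powers $(1-\mu^2)^{-1/4}$ and $(1-\mu^2)^{-1/2}$ defining $\Delta$ appear tuned so that this interpolation closes up with a constant gap depending only on $\eta$.
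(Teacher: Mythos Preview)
Your argument is correct and takes a genuinely different route from the paper's. The paper also splits into small-$\mu$ and large-$\mu$ regimes (cutting at $\mu^2=\eta/2$), but in the large-$\mu$ case it works directly with the block structure of $\Delta\Sigma\Delta$, using the elementary bound that a PSD block matrix has operator norm at most the operator norm of its top-left block plus its bottom-right scalar, and then feeds in $\Tr\mcS=1$ together with $\mcS_{zz}\le 1-\eta$; the resulting $\eta'$ is only $\Theta(\eta^2)$ as $\eta\to 0$. You instead interpolate between two PSD upper bounds on $\Sigma$ itself---the spectral one $\Sigma\preceq(1-\eta)I-\vec{\mu}\vec{\mu}^{\top}$ and the trace one $\Sigma\preceq\Tr(\Sigma)\,I=(1-\mu^2)I$---and optimizing the convex combination delivers the quantitatively sharper $\eta'=1-\sqrt{1-\eta}=\Theta(\eta)$. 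One minor slip in wording: when you write ``maximized over the allowed range $t\in[\sqrt{\eta},1]$ at $t=\sqrt{1-\eta}$,'' the range actually in play in that step is $t\in[\sqrt{\eta},\sqrt{1-\eta}]$ (the case $\mu^2\ge\eta$), since the function $1-\eta/(\eta+t+t^2)$ is increasing and would otherwise peak at $t=1$; the complementary sub-range $t\ge\sqrt{1-\eta}$ is precisely what your final sentence covers with $\theta=1$. With that clarification the argument is complete.
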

\begin{proof}
We split into two cases based on whether $\mu^2 \ge \eta/2$.
The case where $\mu^2 < \eta/2$
is the simpler case: since $\|\Sigma_{\sf op}\| \leq \|\mcS\|_{\sf op} \leq 1-\eta$ and $\|\diagmat\|_{\sf op}^2 \leq (1-\mu^2)^{-1}$, we have 
\begin{align*}
    \|\diagmat \Sigma \diagmat\|_{\sf op} \le \frac{\|\Sigma\|_{\sf op}}{1 - \mu^2} \le   \frac{1 - \eta}{1 - \mu^2}
    \le \frac{1-\eta}{1-\eta/2}
    \le 1-\eta/2 \,.
\end{align*}

Now we consider the case where the inequality is not satisfied.
We note that $\Sigma = \mcS - \vec{\mu}\vec{\mu}^\top$ and $\vec{\mu} = (0,0,\mu)$, thus we have that
$\diagmat\Sigma \diagmat$ has the following block structure:
\[\diagmat \Sigma \diagmat = \begin{pmatrix}
    \frac{\mcS_{2 \times 2}}{\sqrt{1 - \mu^2}} & b \\
    b^\dagger &\frac{\mcS_{zz} - \mu^2}{1 - \mu^2}
\end{pmatrix} \,, \]
where $\mcS_{2\times 2}$ is the top left $2 \times 2$ block of $\mcS$, $\mcS_{zz}$ is the bottom right entry, and $b$ is the remaining part (its values will not be directly relevant).
Note also that $\diagmat \Sigma \diagmat \succeq 0$, and it is well-known that any PSD matrix of the form $\begin{bmatrix}
    A & b \\ b^\dagger & c
\end{bmatrix} \succeq 0$ has operator norm at most $\norm{A}_{\sf op} + c$.
We thus know that 
\[\|\diagmat \Sigma \diagmat\|_{\sf op} \le \frac{\mcS_{zz} - \mu^2}{1 - \mu^2} + \frac{\|\mcS_{2 \times 2}\|_{\sf op}}{\sqrt{1 - \mu^2}} 
\leq \frac{\mcS_{zz} - \mu^2}{1 - \mu^2} + \frac{\tr(\mcS_{2\times 2})}{\sqrt{1 - \mu^2}} 
\le  \frac{\mcS_{zz} - \mu^2}{1 - \mu^2} + \frac{1 - \mcS_{zz}}{\sqrt{1 - \mu^2}} 
\,,\]
where we use the fact that $1 = \tr(\mcS) = \tr(\mcS_{2\times 2}) + \mcS_{zz}$.
Then we have 
\begin{align*} 
\|\diagmat \Sigma \diagmat\|_{\sf op}
&\le  \frac{\mcS_{zz} - \mu^2}{1 - \mu^2} + \frac{1 - \mcS_{zz}}{\sqrt{1 - \mu^2}}  \\
&= \mcS_{zz}\Big(\frac{1}{1-\mu^2} - \frac{1}{\sqrt{1-\mu^2}}\Big) + \frac{1}{\sqrt{1 - \mu^2}} - \frac{\mu^2}{1 - \mu^2}  \\
&\le (1-\eta)\Big(\frac{1}{1-\mu^2} - \frac{1}{\sqrt{1-\mu^2}}\Big) + \frac{1}{\sqrt{1 - \mu^2}} - \frac{\mu^2}{1 - \mu^2}  \\
&= 1 - \eta\Big(\frac{1}{1-\mu^2}-\frac{1}{\sqrt{1-\mu^2}}\Big) \\
&\le 1 - \eta \frac{1 -\sqrt{1-\eta/2}}{1-\eta/2} \,.
\end{align*}
The third line is by the fact that $\mcS_{zz} \le \|\mcS\|_{\mathsf{op}} \le 1 - \eta$,
and the last inequality is because the function $\frac{1}{1-\mu^2} - \frac{1}{\sqrt{1-\mu^2}}$ is increasing with $\mu^2$ and that we are in the $\mu^2 \geq \eta/2$ case.

Thus, combining the two cases, there always exists $\eta'$ such that
$\|\diagmat \Sigma \diagmat\|_{\sf op} \le 1 - \eta'$.
Specifically, we have 
\begin{align*}\eta' &\ge \min\Big\{\eta \frac{1 -\sqrt{1-\eta/2}}{1-\eta/2},\ \eta/2 \Big\} > 0 \,.
\qedhere\end{align*}
\end{proof}

\begin{remark}\label{remark:shadows}
    As mentioned in \Cref{remark:pre_shadows}, our techniques can be extended to learn not just the mapping $\rho \mapsto \Tr(O\mcE[\rho])$, but also the joint mapping $(O,\rho)\mapsto \Tr(O\mcE[\rho])$. As this is standard, here we only briefly sketch the main ideas.
    
    The general strategy is to produce a classical description of the channel that we can then use to make predictions about properties of output states. To do this, we draw many input states $\rho_1,\ldots,\rho_N$ from $\calD$, query the channel on each them, and for each of the output states $\calE[\rho_j]$, we apply a randomized Pauli measurement to each of them and use these to form unbiased estimators for the output state. Concretely, given an output state $\calE[\rho_j]$, a randomized Pauli measurement will result in a stabilizer state $\ket{\psi^{(j)}} = \otimes^n_{i=1} \ket{s^{(j)}_i} \in \{\ket{0},\ket{1},\ket{+},\ket{-},\ket{y+},\ket{y-}\}^{\otimes n}$, and the expectation of $\otimes^n_{i=1} (3\ket{s^{(j)}_i}\!\bra{s^{(j)}_i} - I)$ is $\calE[\rho_j]$. The classical description of the channel is given by the $O(\log(1/\epsilon))$-body reduced density matrices of the input states $\rho_1,\ldots,\rho_N$, together with the classical encodings of $\ket{\psi^{(1)}},\ldots,\ket{\psi^{(N)}}$.

    Given an observable $O$, we can then perform regression to predict the labels $\Tr(O\otimes^n_{i=1}(3\ket{s^{(j)}_i}\!\bra{s^{(j)}_i}-I))$ given the features $\{\Tr(P\rho_j)\}_{|P| \le O(\log (1/\epsilon))}$. Because the labels are unbiased estimates of $\Tr(O\calE[\rho_j])$, the resulting estimator will be an accurate approximation to $\rho \mapsto \Tr(O\calE[\rho])$ for $\rho\sim\calD$. 
    
    In this work, we do not belabor these details as they are already investigated in depth in~\cite{huang2023learning}. Instead, we focus on the single observable case as this is where the main difficulty lies in extending the results of~\cite{huang2023learning} to more general input distributions.
\end{remark}

\medskip
\noindent Now we prove \Cref{lem:low-degree-approx} which shows the existence of a low-degree approximator.

\begin{proof}[Proof of \Cref{lem:low-degree-approx}]
Assume w.l.o.g., as in \Cref{remark:diagonalize}, that $\E_D[\rho] = \frac{1}{2}(I + \mu Z)$ and $\vec{\mu} = (0,0,\mu)$.
Let $\mcO$ be expressed in the basis $B = \{I, X, Y, \frac{Z - \mu I}{\sqrt{1-\mu^2}}\}^{\otimes n}$ as in \Cref{clm:qubit basis}.
For a subset $S \subseteq [n]$, we will denote by $\{P \in B: P \sim S\}$ the set of basis elements with non-identity components at indices in $S$ and
identity components at indices in $\overline{S}$.
We will also denote by $|P|$ the number of non-identity components in $P$.

Let $\mcO^{> d} \coloneqq \sum_{|S| > d}\sum_{P \sim S} \hat{\mcO}(S)P$.
We will show that $\mcO^{>d}$ satisfies $\E_{\rho \sim D^{\otimes n}}[\tr(\mcO^{> d} \rho)^2] \le (1 - \eta')^d$ where $\eta' \in (0,1)$ depends on $\eta$.

We first note a bound on the related quantity $\E_{\rho\sim D^{\otimes n}} [\tr(\mcO^2 \rho)]$. 
Because $\|\mcO\|_{\sf op} \le 1$ and $\tr(\E[\rho]) \le 1$, $\tr(\mcO^2 \E[\rho]) \le 1$ as well.
We will expand this quantity as a quadratic form.
\begin{align*}
    \E_{\rho\sim D^{\otimes n}} [\tr(\mcO^2 \rho)] &= \sum_{P,Q \in B} \hat{\mcO}(P) \hat{\mcO}(Q)\tr(PQ \E[\rho]) \\
    &=  \sum_{P,Q \in B} \hat{\mcO}(P) \hat{\mcO}(Q)\tr(\otimes_{i=1}^n P_iQ_i \E[\rho_i]) \\
    &=  \sum_{P,Q \in B} \hat{\mcO}(P) \hat{\mcO}(Q)\prod_{i=1}^n\tr( P_iQ_i \E[\rho_i]) \,.
\end{align*}
Note that the product is 0 whenever exactly one of $P_i$, $Q_i$ is $I$ for any $i \in [n]$ (as $\Tr((Z-\mu I)\cdot (I+\mu Z)) = 0$).
Therefore, we can partition the terms into groups that share a subset of identity variables: 
\begin{align*}
    \E_{\rho\sim D^{\otimes n}} [\tr(\mcO^2 \rho)] &= \sum_{S \subseteq [n]} \sum_{P,Q \sim S} \hat{\mcO}(P) \hat{\mcO}(Q)\prod_{i=1}^n\tr( P_iQ_i \E[\rho_i]) \\
    &= \sum_{S \subseteq [n]} \hat{\mcO}_S^\dagger M^{\otimes |S|} \hat{\mcO}_S \,,
\end{align*}
where $\hat{\mcO}_S$ is the vector of coefficients for the set $\{P: P \sim S\}$, and $M$ is the $3 \times 3$ matrix such that $M_{P,Q} = \tr( PQ \E_D[\rho])$:
\[M = \begin{pmatrix}
    1 & i\mu  & 0  \\
    -i\mu  & 1 & 0  \\
    0 & 0 & 1 
\end{pmatrix} \,. \]
Here, the entry $i\mu$ arises because $XY = iZ$ and $\Tr(XY \cdot \frac{1}{2}(I+\mu Z)) = i\mu$.
Since $M$ is positive semidefinite,
we have $\hat{\mcO}_S^\dagger M^{\otimes |S|} \hat{\mcO}_S \ge 0$ for all $S$,
and therefore we have 
\begin{align}
    1 \ge \E_{\rho\sim D^{\otimes n}} [\tr(\mcO^2 \rho)]
    = \sum_{S \subseteq [n]} \hat{\mcO}_S^\dagger M^{\otimes |S|} \hat{\mcO}_S
    \ge \sum_{S \subseteq [n]:|S| > d} \hat{\mcO}_S^\dagger M^{\otimes |S|} \hat{\mcO}_S \,.
    \label{eq:tr-O2-rho}
\end{align}
Now we will write the desired quantity $\E_{\rho \sim D^{\otimes n}} [\tr(\mcO^{> d}\rho)^2]$ as a quadratic form as well:
\begin{align*}
    \E_{\rho \sim D^{\otimes n}} [\tr(\mcO^{> d} \rho)^2] &= \sum_{P,Q \subseteq B:\ |P|,|Q|>d} \hat{\mcO}(P) \hat{\mcO}(Q) \prod_{i \in [n]} \E[\tr(P_i \rho) \tr(Q_i \rho)] \,. 
\end{align*}

Similarly, the product is 0 whenever exactly one of $P_i$ and $Q_i$ is identity (by the guarantee of \Cref{clm:qubit basis} that $\E[\tr(P_i\rho)] = 0$ for $P_i \neq I$), so we can make the same partition:

\begin{align}
    \E_{\rho \sim D^{\otimes n}} [\tr(\mcO^{>d} \rho)^2] &= \sum_{|S|>d} \hat{\mcO}_S^\dagger M'^{\otimes |S|} \hat{\mcO} \,.
    \label{eq:tr-O-rho-squared}
\end{align}

Here $M'$ is $3 \times 3$ matrix such that $M'_{PQ} = \E_{\rho\sim D}[\tr(P\rho)\tr(Q\rho)]$ for $P,Q \in \{X,Y,\frac{Z-\mu I}{\sqrt{1-\mu^2}}\}$; in other words, it is the second moment matrix of the non-identity elements of our biased Pauli basis.
Below, we show the entries of $M'$ in terms of the Pauli covariance matrix $\Sigma$:

\[M' = \begin{pmatrix}
    \Sigma_{xx} & \Sigma_{xy} & \frac{\Sigma_{xz}}{\sqrt{1 - \mu^2}} \\
    \Sigma_{xy} & \Sigma_{yy} & \frac{\Sigma_{yz}}{\sqrt{1 - \mu^2}} \\
    \frac{\Sigma_{xz}}{\sqrt{1 - \mu^2}} & \frac{\Sigma_{yz}}{\sqrt{1 - \mu^2}} & \frac{\Sigma_{zz}}{1 - \mu^2}
\end{pmatrix} \,. \]
Our aim is to bound $M'$ in terms of $M$ in order to show the existence of some $\eta' \in (0,1)$ such that  
\[\E[\tr(\mcO^{> d} \rho)^2] \le (1 - \eta')^d \cdot \tr((\mcO^{> d})^2 \E[\rho]) \le (1-\eta')^d.\]
Comparing Eq.~\eqref{eq:tr-O2-rho} and \eqref{eq:tr-O-rho-squared},
it suffices to show that 
\[(1 - \eta')^{|S|}\hat{\mcO}_S^\dagger M^{\otimes |S|} \hat{\mcO}_S \ge  \hat{\mcO}_S^\dagger M'^{\otimes |S|} \hat{\mcO}_S\]
for all vectors $\mcO_S$.
Crucially, since the Pauli coefficients $\mcO_S$ must be \emph{real}, it suffices to prove that
\begin{align}
    M'^{\otimes |S|} \preceq
    (1-\eta')^{|S|} \cdot \mathrm{Re}(M^{\otimes |S|}) \,.
    \label{eq:M'-tensor-bound}
\end{align}


Let $M_{2\times 2}$ be the top left $2 \times 2$ block in $M$, which is exactly the matrix in \Cref{clm:min-eigenvalue}.
From \Cref{clm:min-eigenvalue}, we have $\lambda_{\min}(\mathrm{Re}(M_{2\times 2}^{\otimes |S|})) \ge (1- \mu^2)^{ |S|/2 }$.
By the block structure of $M$, it follows that $\mathrm{Re}(M^{\otimes |S|}) \succeq \diag(\sqrt{1-\mu^2}, \sqrt{1-\mu^2}, 1)^{\otimes |S|}$.
Thus, we can establish Eq.~\eqref{eq:M'-tensor-bound} by proving that
\begin{align}
    M' \preceq (1-\eta') \cdot \diag\Big( \sqrt{1-\mu^2}, \sqrt{1-\mu^2}, 1 \Big) \,.
    \label{eq:M'-bound}
\end{align}

Now, note that $M'$ can be written as $\tilde{\diagmat} \Sigma \tilde{\diagmat}$, where $\Sigma$ is the covariance matrix of $D$ and $\tilde{\diagmat} = \diag(1,1, (1-\mu^2)^{-1/2})$.
Letting $\diagmat = \diag((1 - \mu^2)^{-1/4}, (1 - \mu^2)^{-1/4}, (1 - \mu^2)^{-1/2})$ (which is the diagonal matrix defined in \Cref{clm:op-norm-bound}), we can see that Eq.~\eqref{eq:M'-bound} is equivalent to
\[\|\diagmat M'\diagmat \|_{\mathsf{op}} \le 1-\eta',\]
By \Cref{clm:op-norm-bound}, this is true by our assumption that $\|\mcS\|_{\mathsf{op}} \le 1-\eta$.
This establishes Eq.~\eqref{eq:M'-bound} and thus Eq.~\eqref{eq:M'-tensor-bound}, and from Eq.~\eqref{eq:tr-O2-rho} and \eqref{eq:tr-O-rho-squared}, we have
\begin{align*}
    1 &\ge \E_{\rho\sim D^{\otimes n}} [\tr(\mcO^2 \rho)]
    \ge \sum_{|S| > d} \hat{\mcO}_S^\dagger M^{\otimes |S|} \hat{\mcO}_S \\
    &\ge (1-\eta')^{-d} \sum_{|S| > d} \hat{\mcO}_S^\dagger M'^{\otimes |S|} \hat{\mcO}_S \\
    &= (1-\eta')^{-d} \E_{\rho \sim D^{\otimes n}}[\tr(\mcO^{>d}\rho)^2] \,.
\end{align*}
Therefore,
\[\E_{\rho\sim D^{\otimes n}}[(\tr(\mcO\rho) - \tr(\mcO^{\le d}\rho))^2] \le (1-\eta')^d \,.\]
Hence, we have obtained the claimed result.
\end{proof}

\begin{remark} \label{rem:different-distributions}
    In the proof of \Cref{lem:low-degree-approx}, we relate the quantity $\E_{\rho \sim D^{\otimes n}}[\tr(\mcO^{>d}\rho)^2]$ that we wish to bound to the related quantity $\E_{\rho\sim D^{\otimes n}} [\tr(\mcO^2 \rho)]$, which is at most $1$ since $\|\mcO\|_{\sf op} \leq 1$.
    Due to the choice of our biased Pauli basis $B = \{I,X,Y, \frac{Z-\mu I}{\sqrt{1-\mu^2}}\}^{\otimes n}$, we may write both quantities as sums of $\hat{\mcO}_S^\dagger M'^{\otimes |S|} \hat{\mcO}_S$ and $\hat{\mcO}_S^\dagger M^{\otimes |S|} \hat{\mcO}_S$
    (see Eq.~\eqref{eq:tr-O2-rho} and \eqref{eq:tr-O-rho-squared}).

    Suppose $\rho$ is a product of different distributions where each qubit has mean Bloch vector $\vec{\mu}_i = (0,0,\mu_i)$ (after rotation; see \Cref{remark:diagonalize}) and second moment bounded by $1-\eta$.
    We will instead use the basis $B = \otimes_{i=1}^n \{I,X,Y, \frac{Z-\mu_i I}{\sqrt{1-\mu_i^2}}\}$.
    One can easily see that the above quantities are essentially the same, with $M^{\otimes |S|}$ replaced by $\otimes_{i\in S} M_i$ and similarly for $M'$.
    Then, the next steps in the proof (establishing Eq.~\eqref{eq:M'-tensor-bound} and \eqref{eq:M'-bound}) are exactly the same.
\end{remark}





\section{Lower bounds}
\label{sec:lower-bounds}

In this section, we show lower bounds in the classical case, which automatically imply hardness in the quantum case.
In \Cref{sec:lb-non-product}, we prove \Cref{thm:code-lb}, which shows hardness of learning without the product distribution assumption in \Cref{thm:classical}.
In \Cref{sec:lowerbound_unbiased_degree}, we show that truncating in the unbiased basis fails when the distribution is not mean zero.

\subsection{Lower bounds for learning non-product distributions}
\label{sec:lb-non-product}

In this section, we prove \Cref{thm:code-lb}. We show that if $\calD$ is an arbitrary distribution, then even if $\calD$ is supported on $[-(1-\eta), (1-\eta)]^n$ (i.e., in the interior of the hypercube), there is no learning algorithm.

Let $C$ be a code over $\zo^n$ of size $2^{\Theta(n)}$ and distance $n/4$.
The following fact is standard.

\begin{fact} \label{fact:hardness-learning-code}
    For any constant $\eps > 0$, any learning algorithm that can learn an arbitrary function $f: \{\pm1\}^n \to \{\pm1\}$ over $C$ to $\eps$ error requires $2^{\Omega(n)}$ queries.
\end{fact}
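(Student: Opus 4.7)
The plan is a standard information-theoretic argument: since $\mcD$ will be supported on the exponentially-large code $C$, any algorithm with subexponentially many queries leaves a constant fraction of the codewords effectively unseen, and for worst-case targets it cannot do better than random guessing on those.

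First, I would take $\mcD$ to be the uniform distribution on $C$ and apply Yao's minimax principle: to prove a lower bound against any randomized algorithm on the worst-case target, it suffices to exhibit a distribution over targets against which every deterministic algorithm making few queries fails in expectation. The natural choice is to place the uniform distribution over all $2^{|C|}$ labelings $f \colon C \to \{\pm 1\}$, extended to $\{\pm 1\}^n$ in any fixed way (the extension is irrelevant since $\mcD$ is supported on $C$).

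Next, fix any deterministic algorithm $A$ making $q$ queries. Whether these are adaptive membership queries at arbitrary points of $\{\pm 1\}^n$ or i.i.d.\ samples drawn from $\mcD$, queries outside $C$ reveal nothing about $f|_C$, and each query inside $C$ reveals exactly one of $|C|$ independent uniform bits. Conditioning on the query-answer transcript, the labels of $f$ at the (at least) $|C|-q$ unqueried codewords remain i.i.d.\ uniform $\pm 1$ bits, independent of the output hypothesis $h$. Hence $\Pr[h(x) \neq f(x)] = 1/2$ at each such codeword $x$, and the expected error under $\mcD$ is at least $(|C|-q)/(2|C|)$. Forcing this to be at most $\eps$ gives $q \geq (1-2\eps)|C| = 2^{\Omega(n)}$, since $|C| = 2^{\Theta(n)}$.

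I do not anticipate a genuine obstacle; the argument is classical. The only points requiring a little care are (i) treating both the random-sample and membership-query settings uniformly under a single $q$-query accounting, using the observation that queries off $C$ are wasted under the random-$f$-on-$C$ prior, and (ii) applying Yao's principle cleanly to convert the average-over-$f$ lower bound into a worst-case-over-$f$ lower bound against randomized learners. Neither the distance $n/4$ nor any structural property of $C$ beyond $|C| = 2^{\Theta(n)}$ is needed for this fact itself; those properties will instead be invoked when using this fact to construct the distribution on $[-(1-\eta),1-\eta]^n$ for \Cref{thm:code-lb}.
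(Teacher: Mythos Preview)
The paper does not prove this fact at all; it simply states it as ``standard'' and uses it as a black box in the proof of \Cref{thm:code-lb}. Your information-theoretic argument via Yao's principle is exactly the standard one and is correct. One minor remark: since the ambient error metric in the paper is squared loss and the hypothesis may be real-valued, it is slightly cleaner to note that on each unqueried codeword $x$ the conditional expectation $\E[(h(x)-f(x))^2]\ge 1$ (as $f(x)$ is uniform $\pm 1$ independent of $h(x)$), which gives the same $q\ge (1-\eps)|C|=2^{\Omega(n)}$ conclusion without assuming $h$ is $\{\pm 1\}$-valued.
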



\begin{proof}[Proof of \Cref{thm:code-lb}]
    Let $\eta = 0.1$.
    We set the distribution $\calD$ to be the uniform distribution over $(1-\eta)\cdot C$, which is supported in $[-(1-\eta), (1-\eta)]^n$.
    Let $f: \{\pm1\}^n \to \{\pm1\}$ be an arbitrary function.
    Since the code $C$ has distance $n/4$, we can without loss of generality assume that $f(x') = f(x)$ whenever $d(x,x') \leq n/8$.

    Suppose for contradiction that there is an algorithm that, with only $2^{o(n)}$ queries to $f$, outputs a function $g: [-1,1]^n \to \R$ such that $\E_{x \sim \calD}[(g(x) - f(x))^2] \leq \eps$.
    Let $p \coloneqq 1-\eta/2$, and let $\mathrm{Ber}(p)$ be the distribution where we have $+1$ with probability $p$ and $-1$ otherwise.
    Then, for any $x\in \{\pm1\}^n$, we have $f((1-\eta)x) = \E_{z\sim \mathrm{Ber}(p)^{\otimes n}}[f(x \circ z)]$, where $\circ$ denotes entry-wise product.

    We claim that for any $x\in C$, $|\E_{z\sim \mathrm{Ber}(p)^{\otimes n}}[f(x \circ z)] - f(x)| \leq o_n(1)$.
    The number of $-1$ coordinates in $z$ is distributed as a $\mathrm{Bin}(n,\eta/2)$.
    By the Chernoff bound, $\Pr[\mathrm{Bin}(n,\eta/2) \geq (1+\delta)\frac{1}{2}\eta n] \leq e^{-O(\delta^2 \eta n)} = o_n(1)$.
    In particular, for $\eta = 0.1$, we have that $\Pr[\mathrm{Bin}(n,\eta/2) \geq n/8] \leq o_n(1)$.
    Thus, with probability $1-o_n(1)$, $d(x\circ z, x) < n/8$, which means that $f(x\circ z) = f(x)$.
    This proves that $|f((1-\eta)x) -f(x)| \leq o_n(1)$ for all $x\in C$.

    Then, let $h(x) = g((1-\eta)x)$.
    \begin{align*}
        \E_{x\in C}\Brac{(h(x) - f(x))^2}
        &= \E_{x\in C}\Brac{(g((1-\eta)x) - f(x))^2} \\
        &\leq \E_{x\in C}\Brac{(g((1-\eta)x) - f((1-\eta)x))^2} + o_n(1) \\
        &= \E_{x\sim \calD}\Brac{(g(x) - f(x))^2} + o_n(1)  \\
        &\leq \eps + o_n(1) \,.
    \end{align*}
    This means that $h$ is an $\eps$-approximation of $f$ over $C$.
    This contradicts \Cref{fact:hardness-learning-code} thus completing the proof.
\end{proof}

\subsection{Lower bounds for unbiased degree truncation}
\label{sec:lowerbound_unbiased_degree}

In \Cref{thm:classical}, if the product distribution $\mcD$ over $[-(1-\eta), 1-\eta]^n$ has mean zero, then directly truncating $f$ with respect to the standard monomial basis at degree $O(\log(1/\eps)/\log(1/(1-\eta)))$ (independent of $n$) suffices.
However, in this section, we will show that without the mean zero assumption, even truncating at $\Omega(n)$ degree w.r.t.\ the monomial basis does not give a small approximation error.
Our counter-example implies that truncation in any distribution-oblivious basis will fail on some product distribution.
In the quantum setting, this implies that low-degree truncation in the standard Pauli basis fails on some product distribution as well.

The counter-example is quite simple: $f$ is the multilinear extension of the Boolean majority function, and $\mcD$ is supported on a single nonzero point (which is in fact a product distribution).

\begin{fact}[\cite{ODonnell14}]
\label{fact:majority-weight}
Let $f$ be the multilinear extension of the Boolean majority function. 
Then the $\ell_2$ Fourier weight on terms of degree $k$ is $\Theta(k^{-3/2})$.
\end{fact}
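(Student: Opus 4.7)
The plan is to reduce the claim to a Hermite-analytic computation for the sign function on a standard Gaussian, and then compute those Hermite coefficients explicitly. Throughout, I will work with the sequence $\mathrm{Maj}_n$ of majority functions on $n$ bits (with $n$ odd, $n\to\infty$), and show that the level-$k$ Fourier weight $W^k[\mathrm{Maj}_n]$ tends to a quantity of order $k^{-3/2}$ for odd $k$ (and is identically zero for even $k$ since $\mathrm{Maj}_n$ is an odd function).

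First, I would invoke symmetry: because $\mathrm{Maj}_n$ is invariant under the full symmetric group on its inputs, $\wh{\mathrm{Maj}_n}(S)$ depends only on $|S|$, so $W^k[\mathrm{Maj}_n] = \binom{n}{k}\,\wh{\mathrm{Maj}_n}([k])^2$. I would then recognize $\wh{\mathrm{Maj}_n}([k]) = \E[\mathrm{sgn}(x_1+\cdots+x_n)\, x_1\cdots x_k]$, condition on the sum of the remaining $n-k$ coordinates, and express the coefficient as an expectation of a Kravchuk-type polynomial in a centered binomial. The second step is to pass to the Gaussian limit: by the CLT, $(x_1+\cdots+x_n)/\sqrt{n}$ converges to a standard Gaussian $Z$, and the scaled level-$k$ weight converges to the squared $k$-th Hermite coefficient of $\mathrm{sgn}(Z)$, i.e.\ $W^k[\mathrm{Maj}_n]\to a_k^2\, k!$ where $a_k = \E[\mathrm{sgn}(Z)\,H_k(Z)]/k!$ and $H_k$ is the probabilist's Hermite polynomial.

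Next I would evaluate $a_k$ explicitly using the identity $H_k(z)\phi(z) = (-1)^k \phi^{(k)}(z)$, where $\phi$ is the standard Gaussian density. Integration by parts gives
\[
\E[\mathrm{sgn}(Z) H_k(Z)] = 2\int_0^{\infty} H_k(z)\phi(z)\,dz = 2(-1)^{k+1}\phi^{(k-1)}(0),
\]
and for odd $k$ (writing $k = 2m+1$) the value $\phi^{(k-1)}(0) = (-1)^{m}\,\frac{(2m)!}{2^m\, m!\,\sqrt{2\pi}}$ is nonzero, while for even $k$ it vanishes. Combining these ingredients gives a closed form for $W^k[\mathrm{sgn}(Z)]$, namely $\frac{2}{\pi}\cdot \frac{1}{(2m+1)}\binom{2m}{m}^2 4^{-m}$ up to constants. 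Finally, Stirling's approximation $\binom{2m}{m} \sim 4^m/\sqrt{\pi m}$ yields $W^k[\mathrm{sgn}(Z)] \sim \frac{2\sqrt{2}}{\pi^{3/2}}\, k^{-3/2}$ for odd $k$, which gives the matching $\Theta(k^{-3/2})$ upper and lower bounds.

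The main obstacle is justifying the passage to the Gaussian limit uniformly in $k$: one needs that for large but fixed $n$, $W^k[\mathrm{Maj}_n]$ is genuinely close to $W^k[\mathrm{sgn}(Z)]$ for all $k$ in the relevant range, so that the asymptotic in $k$ survives. This can be handled by keeping the finite-$n$ Kravchuk expression exact and noting that for $k = o(\sqrt{n})$ the Gaussian approximation of the inner binomial expectation is valid with vanishing error, which suffices since the claim is qualitative in $n$ and we only need the $\Theta$ behavior in $k$. The remaining book-keeping (the explicit value of $H_{k-1}(0)$ and the Stirling step) is routine and I would not dwell on it.
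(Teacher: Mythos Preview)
The paper does not prove this fact; it is simply cited from O'Donnell's textbook. Your Hermite-analytic route is essentially the standard derivation of the \emph{limiting} statement $\lim_{n\to\infty} W^k[\mathrm{Maj}_n] = \Theta(k^{-3/2})$, and the computation is correct apart from a slip in the intermediate closed form: your $\tfrac{2}{\pi}\cdot \tfrac{1}{2m+1}\binom{2m}{m}^{2}\, 4^{-m}$ should have $\binom{2m}{m}$ to the first power (with the square the expression diverges), though your final asymptotic $\tfrac{2\sqrt{2}}{\pi^{3/2}}\,k^{-3/2}$ is the correct one.

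The real gap is the one you yourself flag. You argue that the Gaussian approximation is valid for $k = o(\sqrt{n})$ and that this ``suffices since the claim is qualitative in $n$.'' But the paper's sole use of this fact, in the very next lemma, is at degree $d = \delta n$ for a fixed positive constant $\delta$ --- so $k$ is linear in $n$, well outside any CLT regime. In that range your argument does not establish the $\Theta(k^{-3/2})$ bound. To obtain the uniform estimate the application needs, one instead uses the \emph{exact} formula for $\widehat{\mathrm{Maj}_n}(S)$ (also in O'Donnell, obtained by a direct combinatorial evaluation of $\E[\mathrm{Maj}_n(x)\,x_1\cdots x_k]$), which expresses $W^k[\mathrm{Maj}_n]$ as an explicit ratio of binomial coefficients; Stirling then yields $\Theta(k^{-3/2})$ with absolute constants valid throughout $k \le \delta n$. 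Your Kravchuk expression is a fine starting point for that computation, but it must be evaluated exactly rather than passed through the Gaussian limit.
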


The following is a well-known fact in approximation theory.

\begin{fact}[Chebyshev extremal polynomial inequality~\cite{rivlin2020chebyshev}]
\label{fact:leading-coeff}
    Let $p$ be a degree-$d$ univariate polynomial with leading coefficient $1$.
    Then, $\max_{x\in[-1,1]} |p(x)| \geq 2^{-d+1}$.
\end{fact}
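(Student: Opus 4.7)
The plan is to invoke the extremal (equioscillation) property of the Chebyshev polynomials $T_d(x) = \cos(d \arccos x)$ and then run a standard sign-counting argument. Recall the facts we will need about $T_d$: it has degree $d$ with leading coefficient $2^{d-1}$, satisfies $|T_d(x)| \leq 1$ on $[-1,1]$, and attains the extreme values $\pm 1$ alternately at the $d+1$ nodes $x_k = \cos(k\pi/d)$ for $k = 0, 1, \ldots, d$. Hence the rescaled polynomial $\widehat{T}_d(x) := 2^{1-d} T_d(x)$ is monic of degree $d$ and achieves $\max_{[-1,1]} |\widehat{T}_d| = 2^{1-d}$, exactly matching the bound we are trying to prove is optimal.

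The main step is a proof by contradiction. Suppose some monic degree-$d$ polynomial $p$ satisfies $\max_{x\in[-1,1]} |p(x)| < 2^{1-d}$, and consider $q(x) := \widehat{T}_d(x) - p(x)$. Since both $\widehat{T}_d$ and $p$ are monic of degree $d$, the leading terms cancel and $\deg q \le d-1$. At each extremal node $x_k$ we have $\widehat{T}_d(x_k) = (-1)^k \cdot 2^{1-d}$, while by the strict bound on $p$, $|p(x_k)| < 2^{1-d}$. Hence $\mathrm{sgn}(q(x_k)) = (-1)^k$, so the values $q(x_0), q(x_1), \ldots, q(x_d)$ alternate in sign across $d+1$ points, forcing at least $d$ sign changes and therefore at least $d$ real roots in $[-1,1]$ by the intermediate value theorem. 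This contradicts $\deg q \le d-1$ (unless $q \equiv 0$, which would make $p = \widehat{T}_d$ and contradict the strict inequality at the nodes $x_k$).

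I do not expect any substantial obstacles here: the argument is a textbook application of Chebyshev equioscillation. The only thing to be careful about is the bookkeeping on leading coefficients, namely that $T_d$ has leading coefficient $2^{d-1}$ (easily checked by induction from the recurrence $T_{d+1}(x) = 2xT_d(x) - T_{d-1}(x)$), which is what makes $2^{1-d}$ the correct normalization for a monic polynomial and thus the correct right-hand side of the inequality. If desired, one could also observe that the extremality and characterization of the unique monic minimizer on $[-1,1]$ are part of the classical Chebyshev equioscillation theorem, so this fact admits an immediate appeal to that theorem as an alternative to the sign-counting proof sketched above.
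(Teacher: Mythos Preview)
Your proof is correct and is precisely the classical equioscillation argument. The paper, however, does not give its own proof of this statement: it records it as a cited fact from Rivlin's book and uses it as a black box. So there is nothing to compare against; your proposal supplies a standard, complete justification where the paper simply invokes the result.
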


\begin{lemma}
    Let $f: \{\pm1\}^n \to \{\pm1\}$ be the majority function extended to the domain $[-1,1]^n$.
    Let $0 < a < b < 1$ be fixed constants.
    Then, there exist $\delta \coloneqq \delta(a,b) \in (0,1)$ and $t^*\in [a,b]$ such that the degree-$\delta n$ truncation $f^{\leq \delta n}$ has $|f^{\leq \delta n}(t^* \cdot \vec{1})| \geq \omega_n(1)$.
\end{lemma}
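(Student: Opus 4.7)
The strategy is to reduce the claim to a statement about a single univariate polynomial via the symmetry of majority, show that its leading coefficient is exponentially large by appealing to \Cref{fact:majority-weight}, and then invoke a rescaled version of \Cref{fact:leading-coeff} to convert this leading-coefficient bound into a pointwise lower bound on the interval $[a,b]$.

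First, I would exploit the symmetry of $f = \mathrm{Maj}_n$: all Fourier coefficients at level $k$ are equal to a common value $\alpha_k$, and $\alpha_k = 0$ whenever $k$ is even (since $f$ is odd). Restricting the multilinear extension to the diagonal yields the univariate polynomial
\[
g(t) \;:=\; f(t\vec{1}) \;=\; \sum_{k=0}^n W_k\,t^k, \qquad W_k \;:=\; \binom{n}{k}\alpha_k,
\]
and the degree-$\le \delta n$ truncation on the diagonal is $f^{\le \delta n}(t\vec{1}) = \sum_{k \le \delta n} W_k\,t^k$. Letting $k^*$ be the largest odd integer with $k^* \le \delta n$, this univariate polynomial has degree exactly $k^*$ with leading coefficient $W_{k^*}$.

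Next, I would turn \Cref{fact:majority-weight} into a lower bound on $|W_{k^*}|$. Symmetry and Parseval give $\binom{n}{k}\alpha_k^2 = \sum_{|S|=k}\wh f(S)^2 = \Theta(k^{-3/2})$, hence $|W_{k}| = \binom{n}{k}|\alpha_{k}| = \Theta(k^{-3/4})\binom{n}{k}^{1/2}$. For $k^* = \Theta(\delta n)$, Stirling's formula yields $\binom{n}{k^*}^{1/2} = 2^{H(\delta)n/2 + o(n)}$ with $H$ the binary entropy, so $|W_{k^*}| \ge 2^{H(\delta)n/2 - o(n)}$. To transfer this to a bound on $[a,b]$, I would compose with the affine bijection $\phi(s) := \tfrac{b-a}{2}s + \tfrac{a+b}{2}$ from $[-1,1]$ onto $[a,b]$. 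The polynomial $q(s) := g^{\le\delta n}(\phi(s))$ has leading coefficient $W_{k^*}\bigl((b-a)/2\bigr)^{k^*}$, so applying \Cref{fact:leading-coeff} to the monic rescaling of $q$ gives
\[
\max_{t\in[a,b]}|f^{\le \delta n}(t\vec{1})| \;=\; \max_{s\in[-1,1]}|q(s)| \;\ge\; 2\,|W_{k^*}|\,\Bigl(\tfrac{b-a}{4}\Bigr)^{k^*}.
\]

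Finally, I would choose $\delta = \delta(a,b)$ small enough that this lower bound tends to infinity. Taking $\log_2$ of the right-hand side yields, up to $o(n)$, the quantity $\bigl(\tfrac{1}{2}H(\delta) - \delta\log_2(4/(b-a))\bigr)\,n$. Since $H(\delta)/\delta \to \infty$ as $\delta \to 0^+$, any $\delta$ sufficiently small (depending only on $b-a$) makes this coefficient strictly positive, so the maximum is $2^{\Omega(n)} = \omega_n(1)$, and picking $t^*\in[a,b]$ as any maximizer completes the proof. The main technical care is bookkeeping the $k^*$-vs-$\delta n$ off-by-one and the Stirling error, together with verifying that \Cref{fact:majority-weight} applies at level $k^* = \Theta(n)$ (i.e.\ that the $\Theta(k^{-3/2})$ estimate is not just a small-$k$ asymptotic); both are subleading compared to the linear-in-$n$ gap in the exponent once $\delta$ is small.
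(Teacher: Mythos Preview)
Your proposal is correct and follows essentially the same route as the paper: restrict to the diagonal, identify the leading coefficient via \Cref{fact:majority-weight}, rescale affinely onto $[a,b]$, and apply \Cref{fact:leading-coeff}. If anything you are more careful than the paper---you track the parity issue (only odd levels contribute, so the actual degree is the largest odd $k^*\le\delta n$) and you flag the need for \Cref{fact:majority-weight} to hold at level $\Theta(n)$, both of which the paper glosses over; your entropy/Stirling estimate and the paper's cruder $\binom{n}{d}\ge(1/\delta)^d$ lead to the same conclusion once $\delta$ is taken small.
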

\begin{proof}
    
    Let $g(t) \coloneqq f^{\leq d}(t\cdot \vec{1})$, which is a univariate polynomial of degree $d$, and consider the shifted polynomial $\wt{g}(t) = g(\frac{b-a}{2}t + \frac{a+b}{2})$.
    The leading coefficient of $\wt{g}$, denoted $\wt{c}_d$, is $c_d (\frac{b-a}{2})^d$, where $c_d$ is the leading coefficient of $g$.
    Since $g(t) = f^{\leq d}(t\cdot \vec{1})$, we have $c_d = \sum_{S: |S|=d} \wh{f}(S)$.
    For the majority function, \Cref{fact:majority-weight} implies that $\binom{n}{d} \wh{f}(S)^2 = \Theta(d^{-3/2})$ for all $S$ of size $d$, thus
    \begin{align*}
        |\wt{c}_d| = \Paren{\frac{b-a}{2}}^d \binom{n}{d}^{1/2} \Theta(d^{-3/4}) \,.
    \end{align*}
    Then, by \Cref{fact:leading-coeff}, there must be a $s^*\in[-1,1]$ such that
    \begin{align*}
        |\wt{g}(s^*)| \geq 2^{-d+1} \cdot  \Paren{\frac{b-a}{2}}^d \binom{n}{d}^{1/2} \Theta(d^{-3/4}) \,.
    \end{align*}
    If $d = \delta n$, then $\binom{n}{d} \geq (\frac{1}{\delta})^{d} = e^{d\log(1/\delta)}$.
    Thus, given $0 < a < b < 1$, there exists a $\delta \in (0,1)$ such that the above is $\exp({\Omega(d)})$.
    Thus, there exists a $t^*\in [a,b]$ such that $|f^{\leq d}(t^* \cdot \vec{1})| \geq \omega_n(1)$.
\end{proof}

\section*{Acknowledgments}

SC and HH would like to thank Ryan O'Donnell for a helpful discussion at an early stage of this project. Much of this work was completed while JD and JL were interns at Microsoft Research.

\bibliographystyle{alpha}
\bibliography{biblio}

\newcommand{\etalchar}[1]{$^{#1}$}
\begin{thebibliography}{VDBMKT23}

\bibitem[ADGP24]{arunachalam2024learning}
Srinivasan Arunachalam, Arkopal Dutt, Francisco~Escudero Guti{\'e}rrez, and
  Carlos Palazuelos.
\newblock Learning low-degree quantum objects.
\newblock {\em arXiv preprint arXiv:2405.10933}, 2024.

\bibitem[BY23]{bao2023nearly}
Zongbo Bao and Penghui Yao.
\newblock Nearly optimal algorithms for testing and learning quantum junta
  channels.
\newblock {\em arXiv preprint arXiv:2305.12097}, 2023.

\bibitem[CHI{\etalchar{+}}20]{cirstoiu2020variational}
Cristina Cirstoiu, Zoe Holmes, Joseph Iosue, Lukasz Cincio, Patrick~J Coles,
  and Andrew Sornborger.
\newblock Variational fast forwarding for quantum simulation beyond the
  coherence time.
\newblock {\em npj Quantum Information}, 6(1):82, 2020.

\bibitem[CNY23]{chen2023testing}
Thomas Chen, Shivam Nadimpalli, and Henry Yuen.
\newblock Testing and learning quantum juntas nearly optimally.
\newblock In {\em Proceedings of the 2023 Annual ACM-SIAM Symposium on Discrete
  Algorithms (SODA)}, pages 1163--1185. SIAM, 2023.

\bibitem[EI22]{eskenazis2022learning}
Alexandros Eskenazis and Paata Ivanisvili.
\newblock Learning low-degree functions from a logarithmic number of random
  queries.
\newblock In {\em Proceedings of the 54th Annual ACM SIGACT Symposium on Theory
  of Computing}, pages 203--207, 2022.

\bibitem[FJS91]{furst1991improved}
Merrick~L Furst, Jeffrey~C Jackson, and Sean~W Smith.
\newblock Improved learning of ac 0 functions.
\newblock In {\em COLT}, volume~91, pages 317--325, 1991.

\bibitem[FO21]{flammia2021pauli}
Steven~T Flammia and Ryan O'Donnell.
\newblock Pauli error estimation via population recovery.
\newblock {\em Quantum}, 5:549, 2021.

\bibitem[GHC{\etalchar{+}}24]{gibbs2024dynamical}
Joe Gibbs, Zoe Holmes, Matthias~C Caro, Nicholas Ezzell, Hsin-Yuan Huang,
  Lukasz Cincio, Andrew~T Sornborger, and Patrick~J Coles.
\newblock Dynamical simulation via quantum machine learning with provable
  generalization.
\newblock {\em Physical Review Research}, 6(1):013241, 2024.

\bibitem[HBC{\etalchar{+}}22]{huang2022quantum}
Hsin-Yuan Huang, Michael Broughton, Jordan Cotler, Sitan Chen, Jerry Li, Masoud
  Mohseni, Hartmut Neven, Ryan Babbush, Richard Kueng, John Preskill, et~al.
\newblock Quantum advantage in learning from experiments.
\newblock {\em Science}, 376(6598):1182--1186, 2022.

\bibitem[HCP23]{huang2023learning}
Hsin-Yuan Huang, Sitan Chen, and John Preskill.
\newblock Learning to predict arbitrary quantum processes.
\newblock {\em PRX Quantum}, 4(4):040337, 2023.

\bibitem[HFW20]{harper2020efficient}
Robin Harper, Steven~T Flammia, and Joel~J Wallman.
\newblock Efficient learning of quantum noise.
\newblock {\em Nature Physics}, 16(12):1184--1188, 2020.

\bibitem[HKP20]{huang2020predicting}
Hsin-Yuan Huang, Richard Kueng, and John Preskill.
\newblock Predicting many properties of a quantum system from very few
  measurements.
\newblock {\em Nature Physics}, 16(10):1050--1057, 2020.

\bibitem[HKP21]{huang2021information}
Hsin-Yuan Huang, Richard Kueng, and John Preskill.
\newblock Information-theoretic bounds on quantum advantage in machine
  learning.
\newblock {\em Physical Review Letters}, 126(19):190505, 2021.

\bibitem[HLB{\etalchar{+}}24]{huang2024learning}
Hsin-Yuan Huang, Yunchao Liu, Michael Broughton, Isaac Kim, Anurag Anshu, Zeph
  Landau, and Jarrod~R McClean.
\newblock Learning shallow quantum circuits.
\newblock {\em arXiv preprint arXiv:2401.10095}, 2024.

\bibitem[HP07]{hayden2007black}
Patrick Hayden and John Preskill.
\newblock Black holes as mirrors: quantum information in random subsystems.
\newblock {\em Journal of high energy physics}, 2007(09):120, 2007.

\bibitem[HP19]{hayden2019learning}
Patrick Hayden and Geoffrey Penington.
\newblock Learning the alpha-bits of black holes.
\newblock {\em Journal of High Energy Physics}, 2019(12):1--55, 2019.

\bibitem[HYF21]{harper2021fast}
Robin Harper, Wenjun Yu, and Steven~T Flammia.
\newblock Fast estimation of sparse quantum noise.
\newblock {\em PRX Quantum}, 2(1):010322, 2021.

\bibitem[KKMS08]{kalai2008agnostically}
Adam~Tauman Kalai, Adam~R Klivans, Yishay Mansour, and Rocco~A Servedio.
\newblock Agnostically learning halfspaces.
\newblock {\em SIAM Journal on Computing}, 37(6):1777--1805, 2008.

\bibitem[KM91]{kushilevitz1991learning}
Eyal Kushilevitz and Yishay Mansour.
\newblock Learning decision trees using the fourier spectrum.
\newblock In {\em Proceedings of the twenty-third annual ACM symposium on
  Theory of computing}, pages 455--464, 1991.

\bibitem[KS01]{klivans2001learning}
Adam~R Klivans and Rocco Servedio.
\newblock {Learning DNF in time $2^{\tilde{O}(n^{1/3})}$}.
\newblock In {\em Proceedings of the thirty-third annual ACM symposium on
  Theory of computing}, pages 258--265, 2001.

\bibitem[KSVZ23]{klein2023quantum}
Ohad Klein, Joseph Slote, Alexander Volberg, and Haonan Zhang.
\newblock Quantum and classical low-degree learning via a dimension-free remez
  inequality.
\newblock {\em arXiv preprint arXiv:2301.01438}, 2023.

\bibitem[LMN93]{linial1993constant}
Nathan Linial, Yishay Mansour, and Noam Nisan.
\newblock Constant depth circuits, fourier transform, and learnability.
\newblock {\em Journal of the ACM (JACM)}, 40(3):607--620, 1993.

\bibitem[Mar74]{margulis1974probabilistic}
Grigorii~Aleksandrovich Margulis.
\newblock Probabilistic characteristics of graphs with large connectivity.
\newblock {\em Problemy peredachi informatsii}, 10(2):101--108, 1974.

\bibitem[MO08]{montanaro2008quantum}
Ashley Montanaro and Tobias~J Osborne.
\newblock Quantum boolean functions.
\newblock {\em arXiv preprint arXiv:0810.2435}, 2008.

\bibitem[MRL08]{mohseni2008quantum}
Masoud Mohseni, Ali~T Rezakhani, and Daniel~A Lidar.
\newblock Quantum-process tomography: Resource analysis of different
  strategies.
\newblock {\em Physical Review A}, 77(3):032322, 2008.

\bibitem[MRT18]{mohri2018foundations}
Mehryar Mohri, Afshin Rostamizadeh, and Ameet Talwalkar.
\newblock {\em Foundations of machine learning}.
\newblock 2018.

\bibitem[NPVY23]{nadimpalli2023pauli}
Shivam Nadimpalli, Natalie Parham, Francisca Vasconcelos, and Henry Yuen.
\newblock On the pauli spectrum of qac0.
\newblock {\em arXiv preprint arXiv:2311.09631}, 2023.

\bibitem[O'D14]{ODonnell14}
Ryan O'Donnell.
\newblock {\em {Analysis of Boolean Functions}}.
\newblock Cambridge University Press, 2014.

\bibitem[PSSY22]{penington2022replica}
Geoff Penington, Stephen~H Shenker, Douglas Stanford, and Zhenbin Yang.
\newblock Replica wormholes and the black hole interior.
\newblock {\em Journal of High Energy Physics}, 2022(3):1--87, 2022.

\bibitem[Riv90]{rivlin2020chebyshev}
Theodore~J Rivlin.
\newblock {\em Chebyshev polynomials}.
\newblock Courier Dover Publications, 1990.

\bibitem[Rus81]{russo1981critical}
Lucio Russo.
\newblock On the critical percolation probabilities.
\newblock {\em Zeitschrift f{\"u}r Wahrscheinlichkeitstheorie und verwandte
  Gebiete}, 56(2):229--237, 1981.

\bibitem[Rus82]{russo1982approximate}
Lucio Russo.
\newblock An approximate zero-one law.
\newblock {\em Zeitschrift f{\"u}r Wahrscheinlichkeitstheorie und verwandte
  Gebiete}, 61(1):129--139, 1982.

\bibitem[RWZ22]{rouze2022quantum}
Cambyse Rouz{\'e}, Melchior Wirth, and Haonan Zhang.
\newblock Quantum talagrand, kkl and friedgut's theorems and the learnability
  of quantum boolean functions.
\newblock {\em arXiv preprint arXiv:2209.07279}, 2022.

\bibitem[VDBMKT23]{van2023probabilistic}
Ewout Van Den~Berg, Zlatko~K Minev, Abhinav Kandala, and Kristan Temme.
\newblock Probabilistic error cancellation with sparse pauli--lindblad models
  on noisy quantum processors.
\newblock {\em Nature Physics}, 19(8):1116--1121, 2023.

\bibitem[VZ23]{volberg2023noncommutative}
Alexander Volberg and Haonan Zhang.
\newblock Noncommutative bohnenblust--hille inequalities.
\newblock {\em Mathematische Annalen}, pages 1--20, 2023.

\bibitem[YE23]{yang2023complexity}
Lisa Yang and Netta Engelhardt.
\newblock The complexity of learning (pseudo) random dynamics of black holes
  and other chaotic systems.
\newblock {\em arXiv preprint arXiv:2302.11013}, 2023.

\end{thebibliography}

\end{document}